\documentclass[sigconf]{acmart}

\pagestyle{plain}

\newcommand{\myparatight}[1]{\smallskip\noindent{\bf {#1}:}~}

\usepackage{amsthm}
\usepackage{multirow}
\usepackage{makecell}
\usepackage{mathtools}
\usepackage{float}
\usepackage{graphics}
\usepackage{graphicx}
\usepackage[skip=0pt]{caption}
\usepackage{algorithm}
\usepackage{algorithmicx}
\usepackage{algpseudocode}
\usepackage{bm}
\usepackage{color}
\usepackage{multirow}
\usepackage{multicol}
\usepackage{makecell}
\usepackage{balance}
\usepackage{pdfx}
\usepackage{subfig}
\usepackage{pifont}

\usepackage{subfig}

\usepackage{amssymb}
\usepackage{bbm}

\newcommand{\cmark}{\ding{51}}%
\newcommand{\xmark}{\ding{55}}%

\allowdisplaybreaks

\newtheorem{thm}{Theorem}
\newtheorem{lem}{Lemma}
\newtheorem*{remark}{Remark}

\DeclareMathOperator*{\argmax}{argmax}

\renewcommand{\algorithmicrequire}{\textbf{Input:}}
\renewcommand{\algorithmicensure}{\textbf{Output:}}

\algnewcommand\algorithmicforpara{\textbf{for}}
\algnewcommand\algorithmicdoinparallel{\textbf{do in parallel}}
\algdef{S}[FOR]{ForParallel}[1]{\algorithmicforpara\ #1\ \algorithmicdoinparallel}
\DeclareMathOperator*{\argmin}{arg\,min}

\pagestyle{plain}

\copyrightyear{2025} 
\acmYear{2025} 
\setcopyright{cc}
\setcctype{by}
\acmConference[WWW '25]{Proceedings of the ACM Web Conference 2025}{April 28-May 2, 2025}{Sydney, NSW, Australia}
\acmBooktitle{Proceedings of the ACM Web Conference 2025 (WWW '25), April 28-May 2, 2025, Sydney, NSW, Australia}
\acmDOI{10.1145/3696410.3714728}
\acmISBN{979-8-4007-1274-6/25/04}

\makeatletter
\gdef\@copyrightpermission{
  \begin{minipage}{0.3\columnwidth}
   \href{https://creativecommons.org/licenses/by/4.0/}{\includegraphics[width=0.90\textwidth]{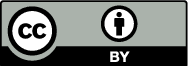}}
  \end{minipage}\hfill
  \begin{minipage}{0.7\columnwidth}
   \href{https://creativecommons.org/licenses/by/4.0/}{This work is licensed under a Creative Commons Attribution International 4.0 License.}
  \end{minipage}
  \vspace{5pt}
}
\makeatother

\begin{document}

\title{Provably Robust Federated Reinforcement Learning}

\author{Minghong Fang}
\authornote{Equal contribution.}
\affiliation{
	\institution{University of Louisville}
	\city{Louisville}
        \state{KY}
        \country{USA}
}
\email{minghong.fang@louisville.edu}

\author{Xilong Wang}
\authornotemark[1]
\affiliation{
	\institution{Duke University}
	\city{Durham}
        \state{NC}
        \country{USA}
}
\email{xilong.wang@duke.edu}

\author{Neil Zhenqiang Gong}
\affiliation{
	\institution{Duke University}
	\city{Durham}
        \state{NC}
        \country{USA}
}
\email{neil.gong@duke.edu}

\begin{abstract}
Federated reinforcement learning (FRL) allows agents to jointly learn a global decision-making policy under the guidance of a central server. While FRL has advantages, its decentralized design makes it prone to poisoning attacks. To mitigate this, Byzantine-robust aggregation techniques tailored for FRL have been introduced. Yet, in our work, we reveal that these current Byzantine-robust techniques are not immune to our newly introduced Normalized attack. Distinct from previous attacks that targeted enlarging the distance of policy updates before and after an attack, our Normalized attack emphasizes on maximizing the angle of deviation between these updates. To counter these threats, we develop an ensemble FRL approach that is provably secure against both known and our newly proposed attacks. Our ensemble method involves training multiple global policies, where each is learnt by a group of agents using any foundational aggregation rule. These well-trained global policies then individually predict the action for a specific test state. The ultimate action is chosen based on a majority vote for discrete action systems or the geometric median for continuous ones. Our experimental results across different settings show that the Normalized attack can greatly disrupt non-ensemble Byzantine-robust methods, and our ensemble approach offers substantial resistance against poisoning attacks. 
\end{abstract}

\begin{CCSXML}
<ccs2012>
   <concept>
       <concept_id>10002978.10003006</concept_id>
       <concept_desc>Security and privacy~Systems security</concept_desc>
       <concept_significance>500</concept_significance>
       </concept>
 </ccs2012>
\end{CCSXML}

\ccsdesc[500]{Security and privacy~Systems security}

\keywords{Federated Reinforcement Learning, Poisoning Attacks, Robustness}

\maketitle


\section{Introduction} \label{sec:intro}

\myparatight{Background and Motivation}Reinforcement learning (RL) is a sequential decision-making procedure and can be modeled as a Markov decision process (MDP)~\cite{sutton2018reinforcement}.
Specifically, an agent in RL operates by taking actions according to a certain policy within a stochastic environment. The agent earns rewards for its actions and uses these rewards to enhance its policy. The ultimate goal of the agent is to learn the best possible policy by consistently engaging with the environment, aiming to maximize its cumulative rewards over the long term.
Despite the advancements in current RL models, they are often data-intensive and face issues due to their limited sample efficiency~\cite{dulac2021challenges,fan2021fault}.  
To tackle this challenge, a straightforward solution might be parallel RL~\cite{nair2015massively,mnih2016asynchronous}.
In parallel RL, agents share their trajectories with a central server to train a policy. However, this is often impractical due to high communication costs, especially for IoT devices~\cite{wang2020federated}, and prohibited in applications like medical records~\cite{liu2020reinforcement} due to data sensitivity.

Toward this end, the federated reinforcement learning (FRL)~\cite{fan2021fault,khodadadian2022federated,jin2022federated,liu2019lifelong,gao2024federated,yuan2023federated} paradigm has been introduced as a solution to the problems faced by traditional parallel RL methods. In FRL, various agents work together to train a global policy under the guidance of a central server, all while keeping their raw trajectories private.
Specifically, during each global training round, the central server shares the current global policy with all agents or a selected group. Agents then refine their local policies using the shared global policy and through interactions with their environment. Subsequently, agents send their local policy updates back to the server. Once the server receives these policy updates from the agents, it employs aggregation rule, like FedAvg~\cite{mcmahan2017communication}, to merge these received policy updates and refine the global policy.
Owing to its willingness to respect agents' privacy, FRL has been widely deployed in real-world systems, such as robotics~\cite{kober2013reinforcement}, autonomous driving~\cite{liang2022federated}, and IoT network~\cite{wang2020federated}.

While FRL has its merits, it is susceptible to poisoning attacks owing to its decentralized nature~\cite{fan2021fault}. 
Such an attack might involve controlling malicious agents, who may either corrupt their local training trajectories (known as \emph{data poisoning attacks}~\cite{fan2021fault}), or intentionally send carefully crafted policy updates to the server (known as \emph{model poisoning attacks}~\cite{fang2020local,shejwalkar2021manipulating,baruch2019little}), with an aim to manipulate the ultimately learnt global policy. A seemingly direct defense against these poisoning attacks would be to implement existing federated learning (FL) based aggregation rules, such as Trimmed-mean~\cite{yin2018byzantine} and Median~\cite{yin2018byzantine}, within the FRL context. Nevertheless, as subsequent experimental results will demonstrate, merely extending existing FL-based aggregation rules does not provide a satisfactory defense performance. 
This is because these rules, originally designed for FL, remain vulnerable to poisoning attacks~\cite{fang2020local,shejwalkar2021manipulating}.
Within the domain of FRL, a recently introduced Byzantine-robust aggregation rule, FedPG-BR~\cite{fan2021fault}, has demonstrated exceptional robustness against existing advanced poisoning attacks~\cite{fang2020local,shejwalkar2021manipulating}.

\myparatight{Our work}In this paper, we propose the first model poisoning attacks to Byzantine-robust FRL. 
In the attack we propose, the attacker deliberately crafts the policy updates on malicious agents so as to maximize the discrepancy between the aggregated policy updates before and after the attack.  
While a direct strategy might be to maximize the distance
between the aforementioned policy updates~\cite{shejwalkar2021manipulating}, this method only accounts for the magnitude of the aggregated policy update, neglecting its directionality. To address this challenge, we propose the \emph{Normalized attack}, wherein the attacker strives to maximize the angular deviation between the aggregated policy updates pre and post-attack.
Nevertheless, solving the reformulated optimization problem remains challenging, especially given that the existing robust aggregation rules like FedPG-BR~\cite{fan2021fault} are not differentiable. To tackle this issue, we introduce a two-stage approach to approximate the solution to the optimization problem. Specifically, in the first stage, we determine the optimal direction for the malicious policy updates, and in the second phase, we calculate the optimal magnitude for these malicious policy updates.

We subsequently propose an innovative \emph{ensemble FRL} method that is provably secure against both existing attacks and our newly proposed Normalized attack. Within our proposed ensemble framework, we first leverage a deterministic method to divide agents into multiple non-overlapping groups by using the hash values of the agents' IDs. Each group then trains a global policy, employing a \emph{foundational aggregation rule} such as Median~\cite{yin2018byzantine} and FedPG-BR~\cite{fan2021fault}, using the agents within its respective group. 
During the testing phase, given a test state $s$, we deploy the well-trained multiple global policies to predict the action for state $s$. 
Considering that the action space in FRL may be either discrete or continuous, we apply varying strategies to aggregate these predicted actions accordingly. 
Specifically, in a discrete action space, we select the action with the highest frequency as the final action. 
Conversely, if the action space is continuous, the final action is determined by calculating the geometric median~\cite{ChenPOMACS17} of the predicted actions.
We theoretically prove that our proposed ensemble method will consistently predict the same action for the test state $s$ before and after attacks, provided that the number of malicious agents is below a certain threshold when the action space is discrete. In the context of a continuous space FRL system, we demonstrate that the distance between actions predicted by our ensemble approach, before and after the attack, is bounded, as long as the number of malicious agents is less than half of the total number of groups.

Our proposed Normalized attack and the proposed ensemble method have been thoroughly evaluated on three RL benchmark datasets. These include two discrete datasets, namely Cart Pole~\cite{barto1983neuronlike} and Lunar Lander~\cite{duan2016benchmarking}, and one continuous dataset, Inverted Pendulum~\cite{barto1983neuronlike}. 
We benchmarked against four existing poisoning attacks including Random action attack~\cite{fan2021fault}, Random noise attack~\cite{fan2021fault}, Trim attack~\cite{fang2020local}, and Shejwalkar attack~\cite{shejwalkar2021manipulating}. 
Furthermore, we employed six foundational aggregation rules for evaluation including FedAvg~\cite{mcmahan2017communication}, Trimmed mean~\cite{yin2018byzantine}, Median~\cite{yin2018byzantine}, geometric median~\cite{ChenPOMACS17}, FLAME~\cite{nguyen2022flame}, and FedPG-BR~\cite{fan2021fault}.
Experimental findings illustrate that our proposed Normalized attack can remarkably manipulate non-ensemble-based methods (where a single global policy is learnt using all agents along with a particular foundational aggregation rule). Distinctively, within a non-ensemble context, our Normalized attack stands out as the exclusive poisoning attack that can target the FRL-specific aggregation rule.
We further demonstrate that our proposed ensemble method can effectively defend against all considered poisoning attacks, including our Normalized attack. Notably, for all robust foundational aggregation rules, the test reward of our proposed ensemble method, even when under attack, closely mirrors that of the FedAvg in a non-attack scenario.
Our main contributions can be summarized as follows:

\begin{list}{\labelitemi}{\leftmargin=1em \itemindent=-0.08em \itemsep=.2em}
	
    \item 
    We propose the Normalized attack, the first model poisoning attacks tailored to Byzantine-robust FRL.
    
    \item
    We propose an efficient ensemble FRL method that is provably secure against poisoning attacks.
    
    \item 
    Comprehensive experiments highlight that our proposed Normalized attack can notably compromise non-ensemble-based robust foundational aggregation rules. Additionally, our proposed ensemble method shows significant capability in defending against both existing and our newly introduced poisoning attacks.
    
\end{list}


\section{Preliminaries and Related Work}

\subsection{Federated Reinforcement Learning}

A federated reinforcement learning (FRL) system~\cite{fan2021fault,khodadadian2022federated,jin2022federated}  consists of $n$ agents and a central server collaborating to train a global policy. Each agent $i \in [n]$ solves a local Markov decision process (MDP)~\cite{sutton2018reinforcement}, defined as $\mathcal{M}_i = \{ \mathcal{S}, \mathcal{A}, \mathcal{P}_i, \mathcal{R}_i, \gamma_i, \rho_i \}$, with $\mathcal{S}$ as the state space, $\mathcal{A}$ the action space, $\mathcal{P}_i$ the transition probability, $\mathcal{R}_i$ the reward function, $\gamma_i$ the discount factor, and $\rho_i$ the initial state distribution.
In FRL, agent $i$ follows a policy $\pi$ that gives the probability of taking action $a$ in state $s$. Through interactions with its environment, the agent generates a trajectory $\tau_i = \{s_{i,1}, a_{i,1}, ..., s_{i,H}, a_{i,H}\}$, starting from an initial state $s_{i,1}$ drawn from $\rho_i$, with $H$ as the trajectory length. The cumulative reward is calculated as $\mathcal{R}(\tau_i) = \sum_{h \in [H]} \gamma_i^h \mathcal{R}_i(s_{i,h}, a_{i,h})$.
Let $\pi_{\bm{\theta}}$ denote a policy parameterized by $\bm{\theta} \in \mathbb{R}^d$, where $d$ is the dimension of $\bm{\theta}$. The distribution of agent $i$'s trajectories under $\pi_{\bm{\theta}}$ is $p_i(\tau_i | \pi_{\bm{\theta}})$. For simplicity, we will refer to $\bm{\theta}$ as $\pi_{\bm{\theta}}$. Agent $i$ evaluates the effectiveness of a policy $\pi$ by solving the following optimization problem:
\begin{align}
\label{rl_opt}
J_i(\bm{\theta}) = \mathbb{E}_{\tau_i \sim p_i(\cdot | \bm{\theta})}[\mathcal{R}(\tau_i)|\mathcal{M}_i].
\end{align}

In FRL, the $n$ agents collaborate to train a global policy aimed at maximizing the total cumulative discounted reward. Thus, the optimization problem in FRL becomes
$
\label{frl_opt}
\max_{\bm{\theta} \in \mathbb{R}^d} \sum_{i\in [n]} J_i(\boldsymbol{\theta}).
$
FRL solves this problem in an iterative manner.
Specifically, in each global training round $t$, FRL performs the following three steps:
\begin{list}{\labelitemi}{\leftmargin=1em \itemindent=-0.08em \itemsep=.2em}
\item \textbf{Step I: Global policy synchronization.} 
The server distributes the current global policy $\bm{\theta}$ to all agents or a selection of them.

\item \textbf{Step II: Local policy updating.} 
Each agent $i \in [n]$ uses the current policy $\bm{\theta}$ to sample a batch of trajectories $\{\tau_{i}^k\}_{k=1}^B$, where $\tau_i^k=\{s_{i,1}^k,a_{i,1}^k,s_{i,2}^k,a_{i,2}^k,\ldots,s_{i,H}^k,a_{i,H}^k\}$, $B$ is the batch size.
Subsequently, agent $i$ calculates a local policy update $\bm{g}_i$.
For example, using the REINFORCE algorithm~\cite{williams1992simple}, $\bm{g}_i$ is calculated as:
\begin{align}
\label{gradient_i}
\bm{g}_i = \frac{1}{B} \sum_{k \in [B]}  \left[
 \sum_{h \in [H]}\nabla_{\bm{\theta}}\log \pi_{\bm{\theta}}(a_{i,h}^k|s_{i,h}^k)\right] \times  \nonumber \\
 \left[\sum_{h\in [H]}\gamma_i^h \mathcal{R}_i(s_{i,h}^k,a_{i,h}^{k})- \Im   \right],
\end{align}
where $\Im$ is a constant. Then agent $i$ sends $\bm{g}_i$ to the server.

\item \textbf{Step III: Global policy updating.} 
The server updates the global policy by aggregating local updates using $\text{AR}\{\cdot\}$:
\begin{align}
\bm{\theta} = \bm{\theta} + \eta \cdot \text{AR} \{\bm{g}_i: i \in [n]\},
\end{align}
where $\eta$ is the learning rate.

\end{list}

FRL methods vary in their aggregation rules.
For example, using FedAvg~\cite{mcmahan2017communication}, the global policy is updated as:
 $\bm{\theta} = \bm{\theta} + \frac{\eta}{n} \sum_{i \in [n]}\bm{g}_i$.

\subsection{Poisoning Attacks to FRL}

The distributed nature of FRL makes it susceptible to poisoning attacks~\cite{fang2020local,shejwalkar2021manipulating,fan2021fault,zhang2024poisoning,yin2024poisoning}, where malicious agents manipulate local training data (data poisoning) or policy updates (model poisoning) to compromise the global policy. For example, in Random action attack~\cite{fan2021fault}, agents act randomly without following a pattern. Model poisoning attacks include Random noise attack~\cite{fan2021fault}, where agents send Gaussian noise as policy updates, Trim attack~\cite{fang2020local}, which maximizes deviation in policy updates, and Shejwalkar attack~\cite{shejwalkar2021manipulating}, which increases the distance between pre- and post-attack updates. While some studies~\cite{ma2023local,zhang2020adaptive} assume agents can manipulate environments or rewards, such scenarios are often impractical and are not considered in our paper.

\subsection{Byzantine-robust Aggregation Rules}

\subsubsection{FL-based Aggregation Rules}

In typical federated learning (FL), the server uses FedAvg~\cite{mcmahan2017communication} to aggregate local model updates\footnote{Note that in FL, we commonly refer to a ``model update'' rather than a ``policy update''.}, but this method is vulnerable to poisoning attacks since even one malicious agent can skew the results. To counter such attacks, several Byzantine-resilient aggregation rules have been proposed~\cite{yin2018byzantine,nguyen2022flame,ChenPOMACS17,Blanchard17,cao2020fltrust,xie2019zeno,rajput2019detox,mozaffari2023every,pan2020justinian,zhang2022fldetector,cao2021provably,rieger2022deepsight,fang2024byzantine,fang2022aflguard,fang2025FoundationFL,yueqifedredefense}. 
Examples include Median~\cite{yin2018byzantine}, which computes the median for each dimension, and Trimmed-mean~\cite{yin2018byzantine}, which removes extreme values before averaging. FLAME~\cite{nguyen2022flame} clusters agents based on cosine similarity, discarding suspicious updates and adding adaptive noise to the rest.
Our proposed ensemble method differs from~\cite{cao2021provably} by addressing continuous action spaces, while their approach only supports categorical labels. We also provide theoretical evidence that an attacked agent behaves similarly to pre-attack conditions as long as malicious agents are fewer than half of the total groups.

\subsubsection{FRL-based Aggregation Rules}

The authors in~\cite{fan2021fault} proposed FedPG-BR to defend against poisoning attacks in FRL. Each training round, the server computes the vector median of local policy updates and marks an update as benign if it aligns in direction and magnitude with the median. It then averages these benign updates to form a policy update estimator. Additionally, the server samples trajectories to compute its own policy update. The final global policy update is obtained by combining the estimator and the server's update using the stochastically controlled stochastic gradient (SCSG)~\cite{lei2017less} to reduce variance.

\begin{table}[htbp]
  \centering
  \small
  \caption{Comparison among the Trim attack, Shejwalkar attack, and our proposed Normalized attack.}
    \begin{tabular}{|c|c|c|}
    \hline
          &  Direction    & Magnitude \\
    \hline
    Trim attack~\cite{fang2020local}     & \xmark     &  \xmark \\
    \hline
    Shejwalkar attack~\cite{shejwalkar2021manipulating}     &  \xmark     & \cmark  \\
    \hline
    Normalized attack   &  \cmark      &  \cmark   \\
    \hline
    \end{tabular}%
     \label{attack_compare}%
\end{table}

\subsubsection{Limitations of Existing Attacks and Defenses}

Current poisoning attacks and defense strategies have limitations. The Trim attack~\cite{fang2020local} targets individual dimensions in linear aggregation rules like Trimmed-mean and Median~\cite{yin2018byzantine}, ignoring the update’s overall direction. In contrast, the Shejwalkar attack~\cite{shejwalkar2021manipulating} considers the entire update but overlooks its direction. Table~\ref{attack_compare} compares these with our proposed Normalized attack. 
Additionally, applying FL-based aggregation rules in FRL leads to poor performance, as they remain vulnerable to known attacks~\cite{fang2020local,shejwalkar2021manipulating}. Although FedPG-BR~\cite{fan2021fault} counters Trim and Shejwalkar attacks, our experiments show it is still vulnerable to our Normalized attack.


\section{Problem Setting}

\myparatight{Threat model}%
We adopt the threat model from~\cite{fan2021fault}, where an attacker controls some malicious agents. These agents may poison their training trajectories or send random policy updates to the server. The attacker's goal is to disrupt the global policy’s convergence or push it toward a bad optimum. 
In a full knowledge attack, the attacker knows all agents' policy updates and the server's aggregation rule. In a partial knowledge attack, the attacker only knows the malicious agents' updates and the aggregation rule.

\myparatight{Defense objectives}We aim to propose a method that achieves the following two goals. I) Superior learning performance: In non-adversarial settings, the method should perform as well as FedAvg, achieving comparable test rewards when all agents are benign. 
II) Resilience: It should defend against both data and model poisoning attacks. Even under attacks, the final global policy should maintain test rewards similar to those of FedAvg in attack-free scenarios.


\section{Our Attack}\label{section:our_attack}

\subsection{Attack as an Optimization Problem}
\label{sec_4.2}
In our proposed Normalized attack, the attacker crafts malicious policy updates to maximize the deviation between the aggregated updates before and after the attack. A simple way to achieve this is by maximizing the distance between the two updates, resulting in an optimization problem for each global training round:
\begin{align}
\label{org_attack_goal}
\text{max}  \left\| \text{AR} \{ \bm{g}_i: i \in [n]\} - \widehat{\text{AR}} \{ \bm{g}_i: i \in [n]\} \right\|,
\end{align}
where \(\left\| \cdot \right\|\) denotes the \(\ell_2\)-norm, and \(\text{AR} \{ \bm{g}_i: i \in [n]\}\) and \(\widehat{\text{AR}} \{ \bm{g}_i: i \in [n]\}\) represent the aggregated policy updates before and after the attack, respectively.
However, Eq.~(\ref{org_attack_goal}) focuses only on the magnitude of the post-attack aggregated update, ignoring its direction. As a result, the original and attacked updates could align in the same direction. Since FedPG-BR~\cite{fan2021fault} evaluates both the direction and magnitude of policy updates, attackers must carefully craft updates to bypass this defense. 
To address this, we propose the \emph{Normalized attack}, which maximizes the angular deviation between the original and attacked aggregated updates, rather than just their magnitude. The formulation of our Normalized attack is as follows:
\begin{align}
\label{our_attack_goal}
\text{max}  \left\| \frac{{\text{AR}} \{ \bm{g}_i: i \in [n]\} }{\left\|{\text{AR}} \{ \bm{g}_i: i \in [n]\} \right\|} - \frac{\widehat{\text{AR}} \{ \bm{g}_i: i \in [n]\} }{\left\| \widehat{\text{AR}} \{ \bm{g}_i: i \in [n]\}  \right\|} \right\|.
\end{align}

\subsection{Solving the Optimization Problem} 

\label{subsec:optimization}

Solving Problem~(\ref{our_attack_goal}) is challenging because many aggregation rules, like FedPG-BR~\cite{fan2021fault}, are non-differentiable. To overcome this, we use practical techniques to approximate the solution by determining the direction of malicious updates in Stage I, followed by calculating their magnitude in Stage II.

\myparatight{Stage I (Optimize the direction)}
We let $\mathcal{B}$ be the set of malicious agents.
Assume that the malicious policy update $\bm{g}_j$, $j \in \mathcal{B}$, is the perturbed version of normalized benign policy update:
\begin{align}
\bm{g}_j = \frac{{\text{AR}} \{ \bm{g}_i: i \in [n]\} }{\left\|{\text{AR}} \{ \bm{g}_i: i \in [n]\} \right\|}+ \lambda \Delta, \quad j \in \mathcal{B},
\end{align}
where $\lambda$ is an adjustment parameter and $\Delta$ is a perturbation vector.
Then we can reformulate Problem~(\ref{our_attack_goal}) as follows:
\begin{equation}
\begin{split}
\label{our_attack_goal_two_para}
\argmax_{\lambda, \Delta}  \left\| \frac{{\text{AR}} \{ \bm{g}_i: i \in [n]\} }{\left\|{\text{AR}} \{ \bm{g}_i: i \in [n]\} \right\|} - \frac{\widehat{\text{AR}} \{ \bm{g}_i: i \in [n]\} }{\left\| \widehat{\text{AR}} \{ \bm{g}_i: i \in [n]\}  \right\|} \right\| \\
\text{s.t.}  \quad
\bm{g}_j = \frac{{\text{AR}} \{ \bm{g}_i: i \in [n]\} }{\left\|{\text{AR}} \{ \bm{g}_i: i \in [n]\} \right\|}+ \lambda \Delta, \quad j \in \mathcal{B}.
\end{split}
\end{equation}

Finding the optimal $\lambda$ and $\Delta$ simultaneously is also not trivial.
In this paper, we fix the $\Delta$ and turn to finding the optimal $\lambda$. 
For example, we can let $\Delta = -\text{sign}(\text{Avg} \{ \bm{g}_i: i \in [n]\})$, where $\text{Avg} \{ \bm{g}_i: i \in [n]\}$ means the average of $n$ local policy updates.
After we fix  $\Delta$, the optimization problem of Eq.~(\ref{our_attack_goal_two_para}) becomes the following:
\begin{equation}
\begin{split}
\label{our_attack_goal_one_para}
\argmax_{\lambda}  \left\| \frac{{\text{AR}} \{ \bm{g}_i: i \in [n]\} }{\left\|{\text{AR}} \{ \bm{g}_i: i \in [n]\} \right\|} - \frac{\widehat{\text{AR}} \{ \bm{g}_i: i \in [n]\} }{\left\| \widehat{\text{AR}} \{ \bm{g}_i: i \in [n]\}  \right\|} \right\| \\
\text{s.t.} \quad
\bm{g}_j = \frac{{\text{AR}} \{ \bm{g}_i: i \in [n]\} }{\left\|{\text{AR}} \{ \bm{g}_i: i \in [n]\} \right\|}+ \lambda \Delta, \quad j \in \mathcal{B}.
\end{split}
\end{equation}

There exist multiple methods to determine $\lambda$. In this study, we adopt the subsequent way to compute $\lambda$.
Specifically, in each global training round, if the value of  $\left\| \frac{{\text{AR}} \{ \bm{g}_i: i \in [n]\} }{\left\|{\text{AR}} \{\bm{g}_i: i \in [n]\} \right\|} - \frac{\widehat{\text{AR}} \{ \bm{g}_i: i \in [n]\} }{\left\| \widehat{\text{AR}} \{ \bm{g}_i: i \in [n]\}  \right\|} \right\|$  increases, then we update $\lambda$ as $\lambda = \lambda + \hat{\lambda}$, otherwise we let $\lambda = \lambda - \hat{\lambda}$.
We repeat this process until the convergence condition satisfies, e.g., the difference of $\lambda$ between two consecutive iterations is smaller than a given threshold.

\myparatight{Stage II (Optimize the magnitude)}
After obtaining the direction of malicious policy update $\bm{g}_j$, we proceed to demonstrate how to determine the magnitude of $\bm{g}_j$ for $j \in \mathcal{B}$.
In particular, let $\tilde{\bm{g}}_j$ represent the scaled policy update for malicious agent $j$, where $\tilde{\bm{g}}_j =  \frac{\bm{g}_j}{\left\| \bm{g}_j\right\|} \times \zeta$, and $\zeta$ is the scaling factor.
We then formulate the following optimization problem to determine the scaling factor $\zeta$:
\begin{equation}
\begin{split}
\label{optimization_stage_2}
\argmax_{\zeta}  \left\| \frac{{\text{AR}} \{ \bm{g}_i: i \in [n]\} }{\left\|{\text{AR}} \{ \bm{g}_i: i \in [n]\} \right\|} - \frac{\widehat{\text{AR}} \{ \bm{g}_i: i \in [n]\} }{\left\| \widehat{\text{AR}} \{\bm{g}_i: i \in [n]\}  \right\|} \right\| \\
\text{s.t.} \quad 
\tilde{\bm{g}}_j =  \frac{\bm{g}_j}{\left\| \bm{g}_j\right\|} \times \zeta, \quad j \in \mathcal{B}.
\end{split}
\end{equation}

The way to compute $\zeta$  is similar to that of $\lambda$.
Specifically, if $\left\| \frac{{\text{AR}} \{ \bm{g}_i: i \in [n]\} }{\left\|{\text{AR}} \{ \bm{g}_i: i \in [n]\} \right\|} - \frac{\widehat{\text{AR}} \{ \bm{g}_i: i \in [n]\} }{\left\| \widehat{\text{AR}} \{ \bm{g}_i: i \in [n]\}  \right\|} \right\|$  increases, we update $\zeta$ as $\zeta = \zeta + \hat{\zeta}$, otherwise $\zeta = \zeta - \hat{\zeta}$. 
We repeat this process until the convergence condition is met, then malicious agent $j$ sends $\tilde{\bm{g}}_j$ to the server.

Fig.~\ref{our_attack_fig} shows the impact of our Normalized attack. In each global round, the attacker maximizes the deviation between pre- and post-attack aggregated updates, causing the global policy $\bm{\theta}$ to drift. Over multiple rounds, this drift leads the FRL system to converge to a suboptimal solution. Since RL loss functions are highly non-convex, with many local optima, the attack’s impact can be significant.

Note that we do not provide a theoretical analysis of our attack for the following reasons: In our Normalized attack, the attacker carefully crafts malicious updates to induce subtle deviations in the aggregated policy each round. These deviations are hard to detect but still degrade the model's performance. Modeling them theoretically is challenging. As shown in prior works~\cite{fang2020local,shejwalkar2021manipulating}, the true goal of an attack is its real-world impact, such as causing incorrect predictions or compromising security. While theory offers insights, practical performance better reflects real-world outcomes.

\begin{figure}[!t]
	\centering
	{\includegraphics[width= 0.88\linewidth]{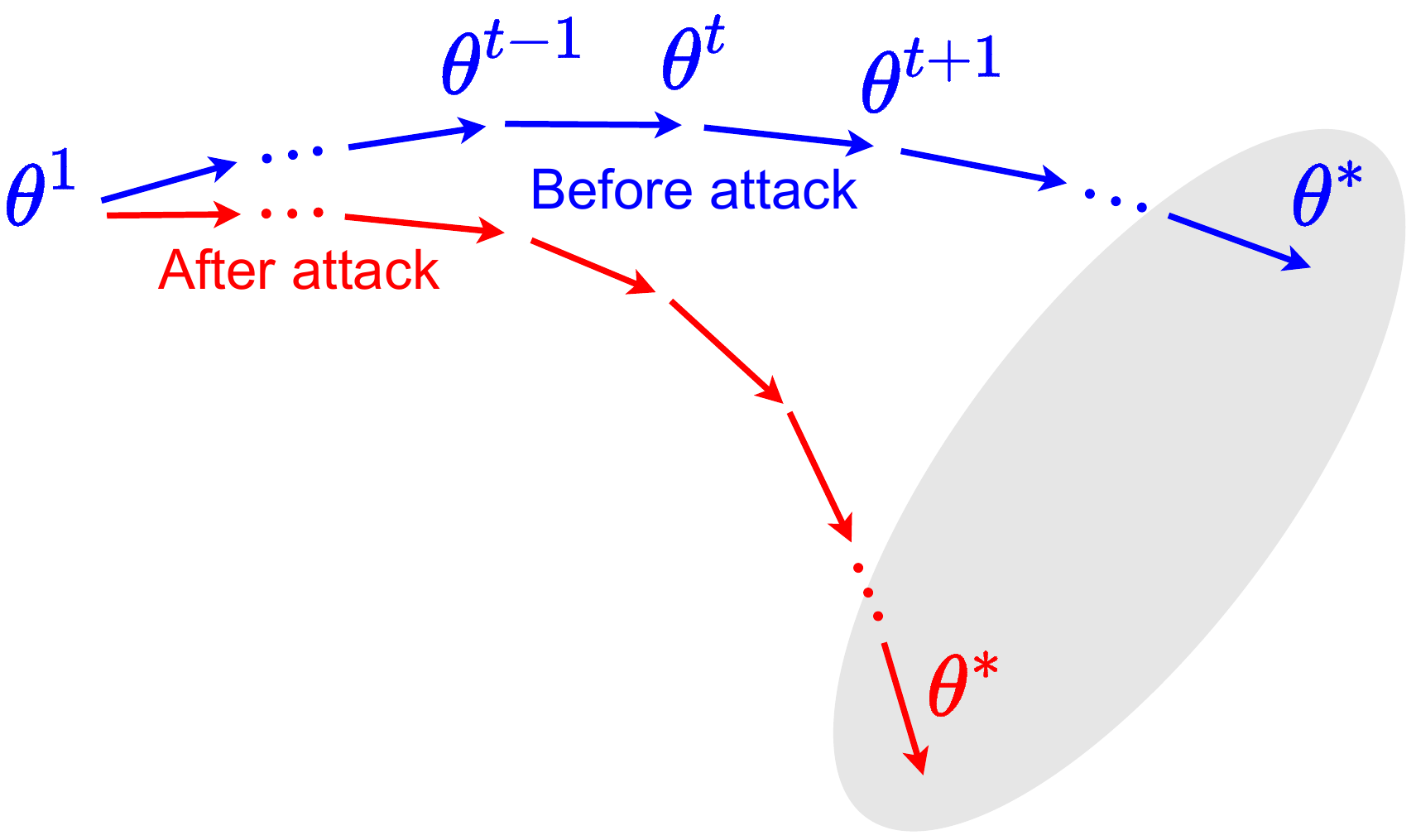}}
	\vspace{1mm}
	\caption{Illustration of the effects of our Normalized attack. $\bm{\theta}^1$ is the initial global policy, $\bm{\theta}^*$ is a local optimum.}
	\label{our_attack_fig}
\end{figure}


\begin{figure*}[!t]
	\centering
	{\includegraphics[width= 0.92\textwidth]{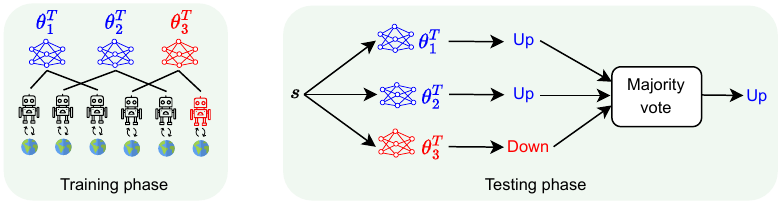}}
	\vspace{1mm}
	\caption{Illustration of our ensemble framework with discrete action space.}
	\label{our_defense_fig}
\end{figure*}

\section{Our Defense}
\label{section:our_aggregation}

\subsection{Overview}

In our approach, we train multiple global policies instead of a single one, each using a foundational aggregation rule like Trimmed-mean or Median~\cite{yin2018byzantine} with different subsets of agents. During testing, the agent predicts an action using all trained policies. For discrete action spaces, the final action is chosen by majority vote, while for continuous spaces, it is determined by the geometric median~\cite{ChenPOMACS17}. 
Fig.~\ref{our_defense_fig} illustrates the process for a discrete action space with six agents split into three groups, each training a global policy over $T$ rounds, resulting in policies $\bm{\theta}_1^T$, $\bm{\theta}_2^T$, and $\bm{\theta}_3^T$. Since the third group contains a malicious agent, $\bm{\theta}_3^T$ is poisoned. During evaluation, given a test state $s$, the three policies predict ``UP'', ``UP'', and ``Down''. With a majority vote, the final action selected is ``UP''.

\subsection{Our Ensemble Method}
\label{subsec:Our Ensemble Method}

In an FRL system with $n$ agents, our method divides them into $K$ non-overlapping groups deterministically, such as by hashing their IDs. Each group trains a global policy using its agents with an aggregation rule like Trimmed-mean, Median~\cite{yin2018byzantine}, or FedPG-BR~\cite{fan2021fault}. Let $\bm{\theta}_k^T$ represent the policy learned by group $k$ after $T$ global rounds. At the end of training, we obtain $K$ global policies: $\bm{\theta}_1^T, \bm{\theta}_2^T, ..., \bm{\theta}_K^T$.
During testing, the agent independently executes the $K$ trained policies. At a test state $s$, let $F(s, \bm{\theta}_k^T)$ denote the action taken by the agent using policy $\bm{\theta}_k^T$. With $K$ policies, the agent generates $K$ actions: \(F(s, \bm{\theta}_1^T), F(s, \bm{\theta}_2^T), \dots, F(s, \bm{\theta}_K^T)\). The final action at state $s$ is determined using an ensemble method, which varies based on whether the action space $\mathcal{A}$ is discrete or continuous.

\myparatight{Discrete action space}%
If the action space $\mathcal{A}$ is discrete, the agent's action is determined by majority vote among the $K$ actions. Let $v(s, a)$ represent the frequency of action $a$ at state $s$, calculated as:
\begin{align}
\label{action_freq}
v(s, a) = \sum\limits_{k \in [K]} \mathbbm{1}_{ \{F(s, \bm{\theta}_k^T)=a \}},
\end{align}
where $\mathbbm{1}$ is the indicator function, which returns 1 if \(F(s, \bm{\theta}_k^T) = a\), and 0 otherwise. The final action \(\Phi(s)\) at test state $s$ is the one with the highest frequency, calculated as:
\begin{align}
\label{fina_action_discrete}
\Phi(s) =  \argmax_{a \in \mathcal{A}} v(s, a).
\end{align}

\myparatight{Continuous action space}%
For a continuous action space $\mathcal{A}$, we use the Byzantine-robust geometric median~\cite{ChenPOMACS17} to aggregate the $K$ actions. The final action at state $s$ is computed as:
\begin{align}
\label{fina_action_continuous}
\Phi(s) =  \argmin_{a \in \mathcal{A}} \sum_{k \in [K]} \left\| F(s, \bm{\theta}_k^T) - a \right\|.
\end{align}

We use the geometric median~\cite{ChenPOMACS17} to aggregate the $K$ continuous actions instead of FedAvg or Trimmed-mean, as our experiments show that these methods are vulnerable to poisoning attacks.

\myparatight{Complete algorithm}%
Algorithm~\ref{our_alg_training} in Appendix outlines the ensemble method during training. In Lines~\ref{each_group_train}-\ref{each_group_agg}, each group trains its global policy in round $t$. In Line~\ref{each_group_train_server}, the server for group $k$ shares the current global policy with its agents, who refine their local policies and send updates back (Lines~\ref{each_group_update}-\ref{each_group_update_end}). Here, $n_k$ represents the agents in group $k$. Finally, the server aggregates these updates to revise the global policy (Line~\ref{each_group_update_server}). 
Algorithm~\ref{our_alg_testing} in Appendix summarizes the testing phase, where the final action is selected by majority vote for discrete actions (Line~\ref{alg2_discrete}) or by geometric median for continuous actions (Line~\ref{alg2_continuous}).

\myparatight{Complexity analysis}%
In our ensemble FRL approach, each agent participates in only one global training round over $T$ rounds. Thus, the computational cost per agent is \(O(T)\).

\subsection{Formal Security Analysis}
\label{subsec:theor_any}

In this section, we present the security analysis of our ensemble method. For discrete action spaces, we show that the predicted action at a test state $s$ remains unchanged despite poisoning attacks, as long as the number of malicious agents stays below a certain threshold. For continuous action spaces, we prove that the difference between actions predicted before and after an attack is bounded if malicious agents make up less than half of the groups.

\begin{thm}[Discrete Action Space]
\label{theorem_1}

Consider an FRL system with \(n\) agents and a test state \(s\), where the action space \(\mathcal{A}\) is discrete. The agents are divided into \(K\) non-overlapping groups based on the hash values of their IDs, and each group trains its global policy using an aggregation rule \(\text{AR}\). 
Define actions \(x\) and \(y\) as those with the highest and second-highest frequencies for state \(s\), with ties resolved by selecting the action with the smaller index. Our ensemble method aggregates the \(K\) actions using Eq.~(\ref{fina_action_discrete}).
Let \(\Phi(s)\) and \(\Phi^{\prime}(s)\) represent the actions predicted when all agents are benign and when up to \(n^{\prime}\) agents are malicious, respectively. The condition for \(n^{\prime}\) is:
\begin{align}
n^{\prime} = \left\lfloor \frac{v(s,x)-v(s,y) - \mathbbm{1}_{\{ y<x \}}}{2} \right\rfloor, 
\end{align}
where \(v(s, x)\) and \(v(s, y)\) represent the pre-attack frequencies of actions \(x\) and \(y\) for state \(s\), respectively. The notation \(y < x\) indicates that action \(y\) has a smaller index than action \(x\).
Then we have that:
\begin{align}
\Phi(s) = \Phi^{\prime}(s) = x.
\end{align}
\end{thm}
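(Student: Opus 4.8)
The plan is to reduce the theorem to a counting argument on the $K$ group votes, exploiting the fact that the group assignment is deterministic and the groups are non-overlapping. The central structural observation is that each agent belongs to exactly one group, so a set of $n'$ malicious agents can influence the training---and hence the action predicted at $s$---of at most $n'$ of the $K$ groups. The remaining (at least $K-n'$) groups are trained entirely on benign agents and therefore output exactly the same action $F(s,\bm{\theta}_k^T)$ as in the attack-free run. I would isolate this as the key lemma: poisoning at most $n'$ agents changes at most $n'$ of the $K$ votes. For a worst-case bound I would grant the adversary the freedom to corrupt any $n'$ groups and set their votes arbitrarily, since this dominates the power of any concrete placement of the malicious agents consistent with the hashing.

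First I would dispose of the benign case, which is immediate: by the definition of $x$ as the action with the highest frequency (ties broken by smaller index), we have $\Phi(s)=\argmax_{a} v(s,a)=x$. Next, for the attack case, I would translate ``at most $n'$ votes change'' into bounds on the post-attack frequencies $v'(s,\cdot)$: the frequency of $x$ cannot drop by more than $n'$, i.e. $v'(s,x)\ge v(s,x)-n'$, while the frequency of any other action $a$ cannot rise by more than $n'$, i.e. $v'(s,a)\le v(s,a)+n'$. The adversary's most effective strategy against a target $a$ is to seize $n'$ groups that originally voted for $x$ and redirect all of them to $a$, which shifts the margin $v'(s,a)-v'(s,x)$ by $2n'$ relative to $v(s,a)-v(s,x)$ in the worst case; note the stated threshold forces $n'<v(s,x)$, so such groups always exist. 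Because $y$ has the largest frequency among all actions other than $x$, the strongest competitor is $a=y$, so it suffices to show that $x$ still beats $y$.

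Finally I would close the tie-breaking bookkeeping, which is the delicate part. If $x$ has the smaller index ($x<y$), then $x$ is selected whenever $v'(s,x)\ge v'(s,y)$, giving $v(s,x)-n'\ge v(s,y)+n'$, i.e. $2n'\le v(s,x)-v(s,y)$. If instead $y<x$, a tie is resolved in favor of $y$, so $x$ requires the strict inequality $v'(s,x)>v'(s,y)$, tightening the requirement to $2n'\le v(s,x)-v(s,y)-1$. The two cases merge into $2n'\le v(s,x)-v(s,y)-\mathbbm{1}_{\{y<x\}}$, and taking the floor over the integer $n'$ recovers exactly the stated threshold.

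I expect the main obstacle to be arguing cleanly that $y$ is genuinely the worst competitor once index-based tie-breaking is in play, rather than the arithmetic itself. Concretely, for any action $a\neq x,y$ with $v(s,a)<v(s,y)$ the post-attack margin $v(s,a)-v(s,x)+2n'$ is strictly negative, so $x$ wins outright; the care is needed for an action $a$ tied with $y$ at the second-highest frequency, where I must verify that $a$ has a larger index than $x$ (using $x<y\le a$ in the relevant case), so $a$ cannot win a post-attack tie against $x$. Handling this edge case is where the argument is most fragile. Combining the benign and attack cases then yields $\Phi(s)=\Phi'(s)=x$.
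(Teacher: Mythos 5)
Your proof is correct and follows essentially the same counting argument as the paper's: at most $n^{\prime}$ of the $K$ group votes can change, giving $v^{\prime}(s,x) \ge v(s,x) - n^{\prime}$ and $v^{\prime}(s,a) \le v(s,a) + n^{\prime}$, with the tie-breaking indicator $\mathbbm{1}_{\{y<x\}}$ entering exactly as you describe. If anything, your treatment is slightly more thorough than the paper's own proof, which compares $x$ only against $y$; your explicit check of competitors $a \neq x, y$ (in particular actions tied with $y$, where $x < y \le a$ resolves any post-attack tie in $x$'s favor) closes a case the paper leaves implicit.
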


\begin{proof}
The proof is in Appendix~\ref{sec:theorem_1_proof}.
\end{proof}

\begin{thm}[Continuous Action Space]
\label{theorem_2}
In a continuous action space FRL system with \(n\) agents and a test state \(s\), the agents are divided into \(K\) non-overlapping groups. If \(n^{\prime}\) agents are malicious and \(n^{\prime} < K/2\), each group trains a global policy using an aggregation rule \(\text{AR}\). Our ensemble method aggregates the \(K\) continuous actions using Eq.~(\ref{fina_action_continuous}). Let \(\Phi(s)\) and \(\Phi^{\prime}(s)\) be the actions predicted before and after the attack, respectively. The following holds:
\begin{align}
 \left\| \Phi(s)  - \Phi^{\prime}(s)  \right\|  \leq \frac{2  w(K-n^{\prime})}{K-2n^{\prime}},
\end{align}
where $w=\max \left\{ \| F(s, \bm{\theta}_k^T) - \Phi(s) \|  : k \in [K] \right\}$, $\{F(s, \bm{\theta}_k^T): k \in [K]\}$ is the set of $K$ continuous actions before attack.
\end{thm}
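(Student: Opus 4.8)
The plan is to read this as a classical robustness bound for the geometric median and to exploit the group structure of the ensemble. Because the $K$ groups are non-overlapping, each of the $n'$ malicious agents lies in exactly one group, so the attack can perturb the learned policy $\bm{\theta}_k^T$ — and hence the predicted action $F(s,\bm{\theta}_k^T)$ — of at most $n'$ groups. I would therefore write $z_k = F(s,\bm{\theta}_k^T)$ for the $K$ pre-attack actions and $z_k'$ for the post-attack actions, let $G \subseteq [K]$ be the set of clean groups (where $z_k' = z_k$) and $B = [K]\setminus G$ the possibly corrupted ones, so that $|B| \le n'$ and $|G| \ge K - n'$. For brevity I abbreviate $p := \Phi(s)$ (the pre-attack geometric median) and $q := \Phi'(s)$ (the post-attack one), and set $D := \|p - q\|$, the quantity to be bounded.

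First I would invoke the defining optimality of the two minimizers in Eq.~(\ref{fina_action_continuous}). Since $q$ minimizes $\sum_{k \in [K]} \|z_k' - a\|$ over $a$, evaluating the objective at the competitor $a = p$ yields $\sum_{k} \|z_k' - q\| \le \sum_{k} \|z_k' - p\|$. I then split both sums into the clean part $G$ and the corrupted part $B$ and apply the triangle inequality selectively: on the left I lower-bound each clean term via $\|z_k - q\| \ge D - \|z_k - p\|$, while on the right I upper-bound each corrupted term via $\|z_k' - p\| \le \|z_k' - q\| + D$.

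The decisive step is that the genuinely unknown quantities — the distances $\sum_{k \in B} \|z_k' - q\|$ from the adversarially chosen post-attack actions to the new median — appear on both sides and cancel exactly. What survives is $(|G| - |B|)\,D \le 2\sum_{k \in G} \|z_k - p\|$. Since every clean action satisfies $\|z_k - p\| \le w$ by the definition of $w$, the right-hand side is at most $2|G|w$, giving $D \le 2|G|w / (|G| - |B|)$.

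Finally I would convert this into the stated bound by a short monotonicity argument. Writing $b = |B|$ and using $|G| + |B| = K$, the bound reads $D \le 2w(K-b)/(K-2b)$; one checks that the map $b \mapsto (K-b)/(K-2b)$ is increasing on $[0, K/2)$, since its derivative equals $K/(K-2b)^2 > 0$. Hence the worst case is $b = n'$, which produces exactly $D \le 2w(K-n')/(K-2n')$, with the hypothesis $n' < K/2$ ensuring the denominator stays positive. I expect the only real subtlety to be arranging the two triangle inequalities so that the adversarial terms cancel rather than accumulate; once that cancellation is secured, the remaining steps — bounding the clean terms by $w$ and the monotonicity check — are routine.
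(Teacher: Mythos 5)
Your proof is correct, but it takes a genuinely different route from the paper. The paper's own proof is a two-step reduction: it observes that, by translation invariance, $\Phi'(s)-\Phi(s)$ is the geometric median of the vectors $\{F'(s,\bm{\theta}_k^T)-\Phi(s): k\in[K]\}$, at least $K-n'$ of which have norm at most $w$ (the clean groups), and then invokes a known robustness lemma for the geometric median (Lemma~\ref{lemma_1}, cited from Minsker and Cohen et al.) with $r=n'$ and $\epsilon=0$ to conclude $\|\Phi(s)-\Phi'(s)\|\le 2w(K-n')/(K-2n')$. What you have done, in effect, is reprove that lemma from scratch in the special case needed here: your optimality-plus-triangle-inequality argument, with the cancellation of the adversarial terms $\sum_{k\in B}\|z_k'-q\|$ and the final monotonicity check in $b=|B|$, is exactly the standard derivation of the $2(K-r)/(K-2r)$ breakdown bound, applied directly to the pair of medians $p,q$ rather than after recentering at $p$. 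Your version buys self-containedness and makes the mechanism transparent (in particular, why the adversary's actual post-attack actions never enter the bound); the paper's version buys brevity and slightly more generality, since the cited lemma also covers $(1+\epsilon)$-approximate geometric medians, which matters in practice because the geometric median can only be computed approximately. One cosmetic note: the paper's threshold is stated as $n'<K/2$, and your argument needs only $|B|\le n'<K/2$ to keep $|G|-|B|>0$, so the two hypotheses match exactly; no gap there.
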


\begin{proof}
The proof is in Appendix~\ref{sec:theorem_2_proof}.
\end{proof}

\begin{remark}
Our framework accounts for cases where honest agents may act similarly to malicious ones. Theorems~\ref{theorem_1} and \ref{theorem_2} hold as long as the total number of adversarial agents—both malicious and unintentionally adversarial—stays within a certain limit.
\end{remark}


 \section{Evaluation}  
 \label{sec:exp}

\subsection{Experimental Setup} 

\subsubsection{Datasets}
We use the following three datasets from different domains, including two discrete datasets (Cart Pole~\cite{barto1983neuronlike}, Lunar Lander~\cite{duan2016benchmarking}), and one continuous dataset (Inverted Pendulum~\cite{barto1983neuronlike}).
Details of these datasets are provided in Appendix~\ref{sec:datasets_app}.

\subsubsection{Compared Poisoning Attacks}

We compare our Normalized attack with one data poisoning attack (Random action attack~\cite{fan2021fault}) and three model poisoning attacks (Random noise~\cite{fan2021fault}, Trim~\cite{fang2020local}, and Shejwalkar~\cite{shejwalkar2021manipulating}). Details are in Appendix~\ref{sec:Compared_Poisoning_Attacks_app}.

\begin{figure*}[!t]
	\centering
	\subfloat[FedAvg]{\includegraphics[width=0.25 \textwidth]{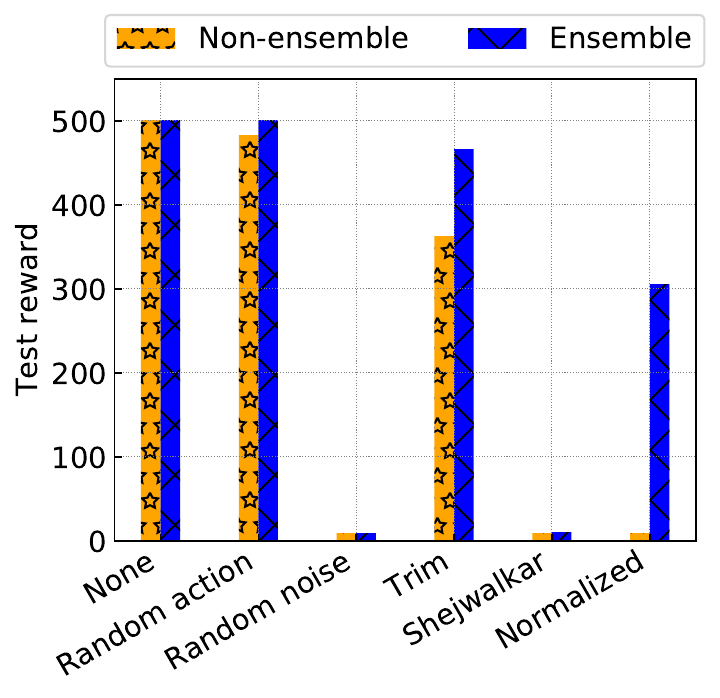}}
	\subfloat[Trimmed-mean]{\includegraphics[width=0.25 \textwidth]{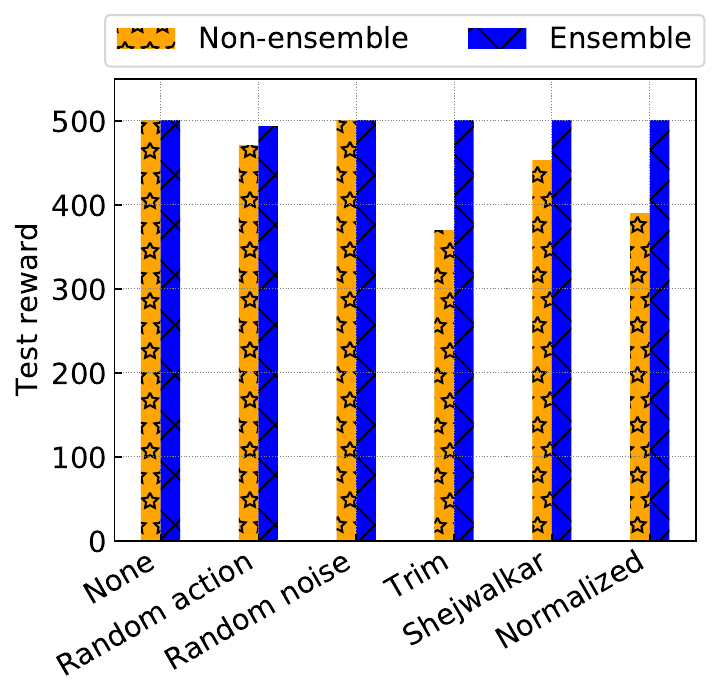}} 
	\subfloat[Median]{\includegraphics[width=0.25 \textwidth]{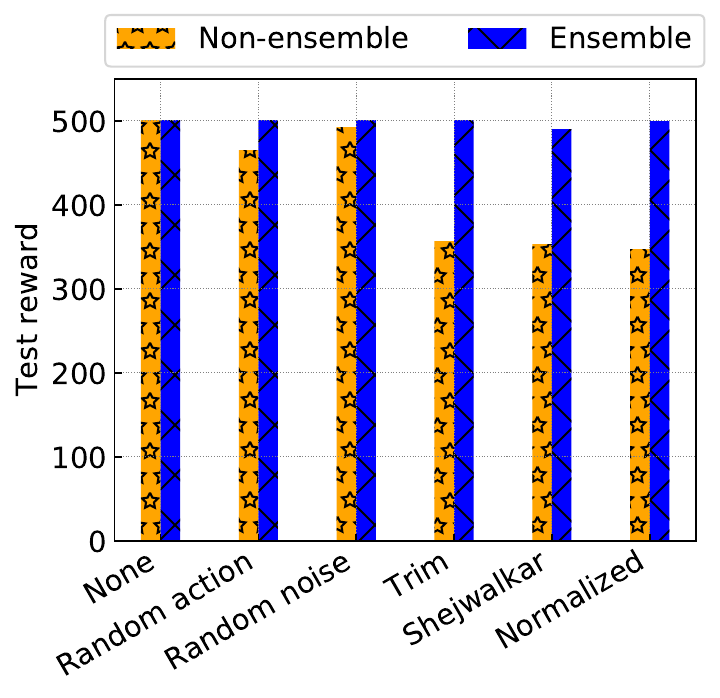}}
	\subfloat[FedPG-BR]{\includegraphics[width=0.25 \textwidth]{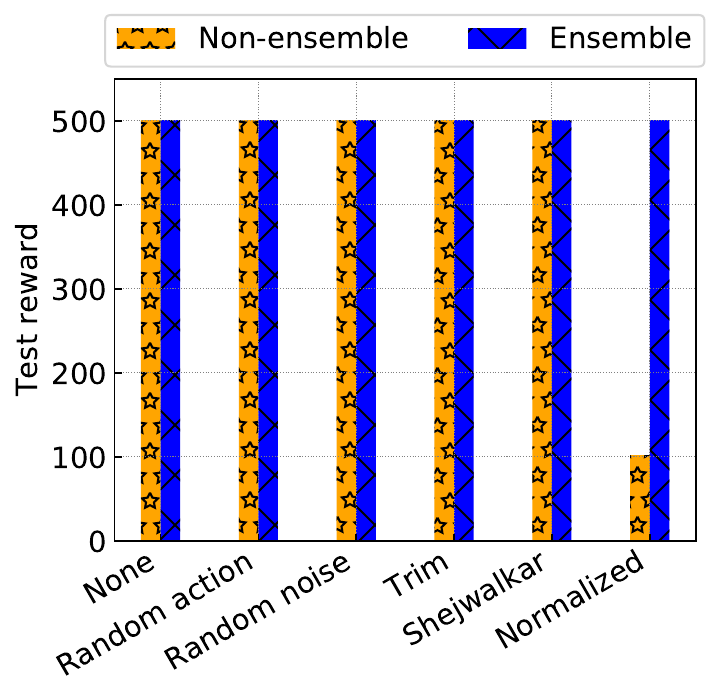}}
	\caption{Results on Cart Pole dataset.}
	\label{Results_CartPole}
\end{figure*}

\begin{figure*}
    \centering
    \includegraphics[width=\linewidth]{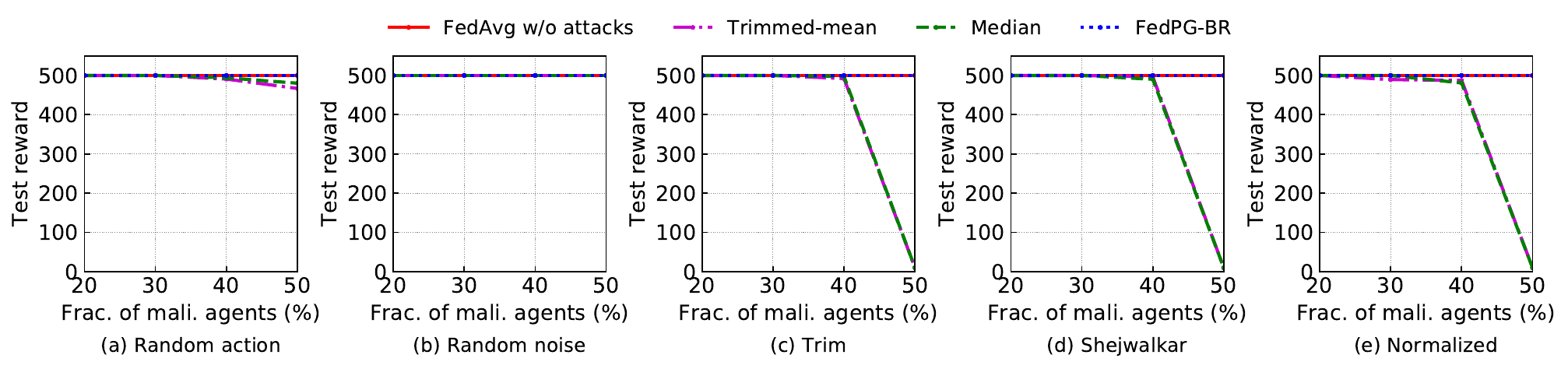}
    \caption{Impact of the fraction of malicious agents on our ensemble method, where the Cart Pole dataset is considered.}
    \label{fig:frac_of_mali}
\end{figure*}

\subsubsection{Foundational Aggregation Rules}
\label{base_agg}

We consider the following state-of-the-art foundational aggregation rules, including five aggregation rules designed for FL (FedAvg~\cite{mcmahan2017communication}, coordinate-wise trimmed mean (Trimmed-mean)~\cite{yin2018byzantine}, coordinate-wise median (Median)~\cite{yin2018byzantine}, geometric median~\cite{ChenPOMACS17} and FLAME~\cite{nguyen2022flame}) and one aggregation rule designed for FRL (FedPG-BR~\cite{fan2021fault}).
Details are in Appendix~\ref{sec:Aggregation_Rules_app}.

\subsubsection{Evaluation Metric}

We evaluate an FRL method’s robustness using \emph{test reward}. For the non-ensemble approach, the test reward is the average reward from 10 sampled trajectories using the trained global policy. In our ensemble method, we also average rewards from 10 trajectories, but actions are predicted using the ensemble framework. A lower test reward indicates a more effective attack and weaker defense.

\subsubsection{Parameter Settings}

By default, we assume that there are 30 agents in total.
Following~\cite{fan2021fault}, we assume that $30\%$ of agents are malicious. 
For our proposed ensemble method, we 
partition the agents into $K=5$ 
disjoint groups.
The batch sizes $B$ for Cart Pole, Lunar Lander, and Inverted Pendulum datasets are set to 16, 64, and 32, respectively. The learning rates for these three datasets are individually set to $1 \times 10^{-3}$, $3\times10^{-3}$, and $1\times10^{-3}$. Furthermore, for each of the three datasets, every agent samples a total of $5,000$, $10,000$, and $5,000$ trajectories during the training phase, respectively. Regarding the policy architectures, we train a Categorical MLP for the Cart Pole and Lunar Lander datasets and a Gaussian MLP for the Inverted Pendulum dataset. 
The policy architectures are shown in Table~\ref{tab:param_MLP} in Appendix.
We assume all agents use the same discount factor $\gamma$, trajectory horizon $H$, and $\Im$.
Our Normalized attack has parameters $ \Delta$, $\hat{\lambda}$ and $\hat{\zeta}$.
The default value of these six parameters are shown in Table~\ref{tab:param_addi} in Appendix.
FedPG-BR uses unique parameters, including mini-batch size \(b\), global sampling steps \(N\), variance bound \(\sigma\), and confidence parameter \(\delta\). Detailed settings are in Table~\ref{tab:param_FedPG-BR} in Appendix. We assume the attacker has full knowledge of the FRL system unless stated otherwise. Results are presented on the Cart Pole dataset by default.
We compare our ensemble method with the non-ensemble approach. In the non-ensemble method, the server trains a single global model with all agents using a foundational aggregation rule from Section~\ref{base_agg}. In our ensemble method, agents are divided into \(K\) groups, each training a global policy with the same aggregation rule by default.

\subsection{Experimental Results} 

\myparatight{Normalized attack is effective against non-ensemble methods}
Fig.~\ref{Results_CartPole} shows the results of different defenses under various attacks on Cart Pole dataset.
The results on Lunar Lander and Inverted Pendulum datasets are shown in Fig.~\ref{Results_LunarLander} and Fig.~\ref{Results_InvertedPendulum} in Appendix, respectively.
Based on Fig.~\ref{Results_CartPole} and Figs.~\ref{Results_LunarLander}-\ref{Results_InvertedPendulum}, it is evident that our proposed Normalized attack successfully targets the non-ensemble methods. 
For instance, in the Lunar Lander dataset, our Normalized attack reduces the test reward of the Median to -33.3 in the non-ensemble setting, compared to a reward of 219.3 when all agents are benign.
Notably, our Normalized attack stands out as the sole method capable of substantially manipulating the non-ensemble FedPG-BR aggregation rule across all three datasets. For example, in the Cart Pole dataset, the test reward of non-ensemble-based FedPG-BR drops from 500 in the absence of an attack to 101.4 under our Normalized attack.
However, existing attacks such as Trim attack and Shejwalkar attack achieve unsatisfactory attack performance.
The reason is that the Trim attack solely takes into account each dimension of the policy update, neglecting the entirety of the update itself. Furthermore, the Shejwalkar attack ignores the direction of the policy update.

\myparatight{Our ensemble method is effective}%
From Fig.~\ref{Results_CartPole} and Figs.~\ref{Results_LunarLander}-\ref{Results_InvertedPendulum} (in Appendix), we observe that when all agents are benign, our ensemble framework achieves test rewards comparable to FedAvg without attacks across all datasets and Byzantine-robust aggregation rules, fulfilling the goal of “superior learning performance.” 
For example, in the Inverted Pendulum dataset, the Trimmed-mean rule within the ensemble framework achieves a test reward of 1000, matching FedAvg's performance without attacks. However, non-ensemble Byzantine-robust aggregation rules remain vulnerable to poisoning attacks, including our Normalized attack.
As shown in the figures, embedding these robust rules within our ensemble framework ensures defense against all considered attacks, achieving the “resilience” goal. For instance, in the Cart Pole dataset under the Normalized attack, FedPG-BR achieves a test reward of 101.4 in the non-ensemble setting but 500 within our ensemble framework. Similarly, for the Inverted Pendulum dataset, Trimmed-mean yields test rewards of 152.7 without ensemble but 1000 with it under the Random action attack. 
However, FedAvg, even within our ensemble framework, remains vulnerable to attacks due to its inherent lack of robustness.

\begin{figure*}[!t]
	\centering
	\subfloat[FedAvg]{\includegraphics[width=0.25 \textwidth]{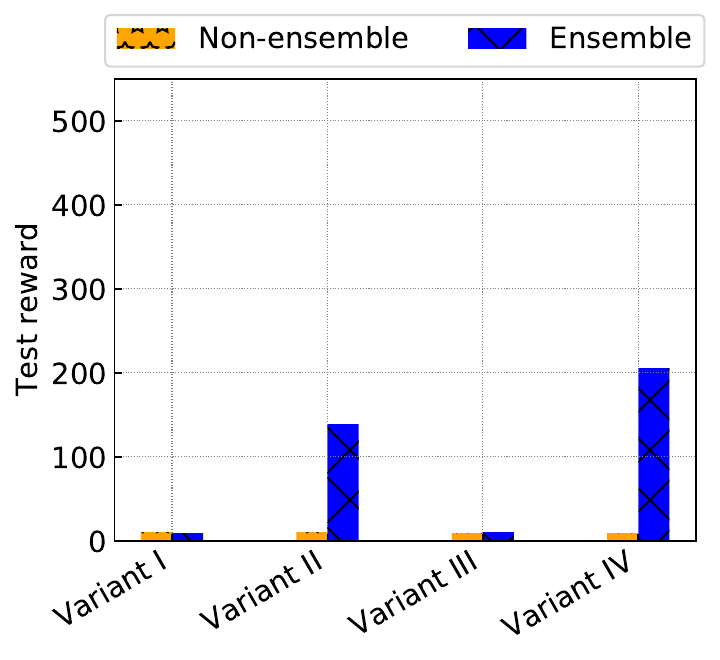}}
	\subfloat[Trimmed-mean]{\includegraphics[width=0.25 \textwidth]{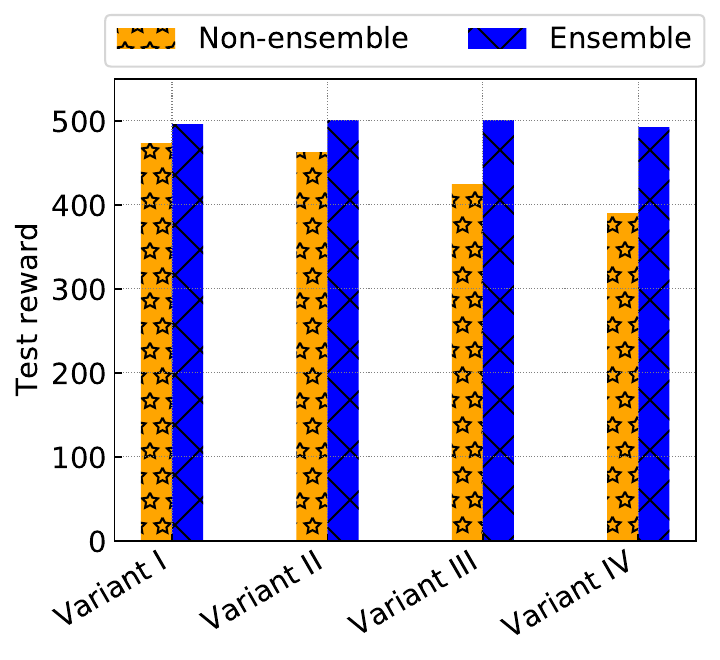}} 
	\subfloat[Median]{\includegraphics[width=0.25 \textwidth]{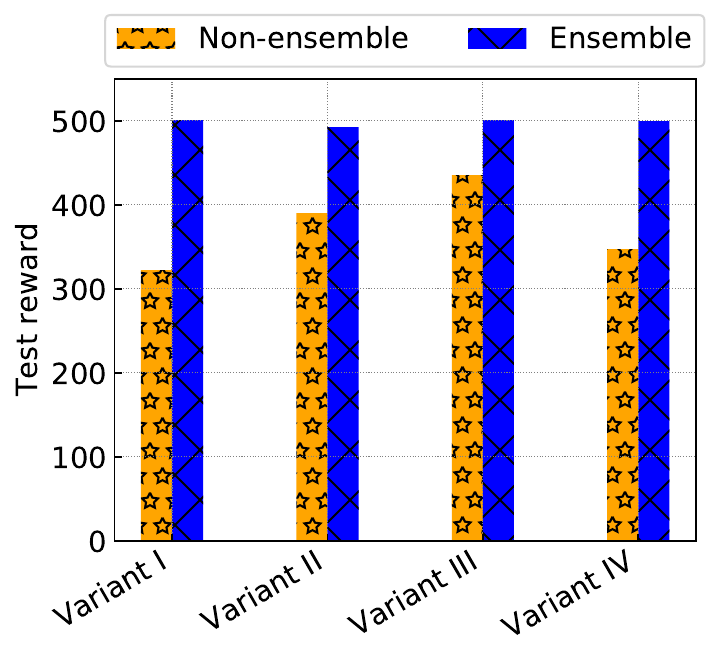}}
	\subfloat[FedPG-BR]{\includegraphics[width=0.25 \textwidth]{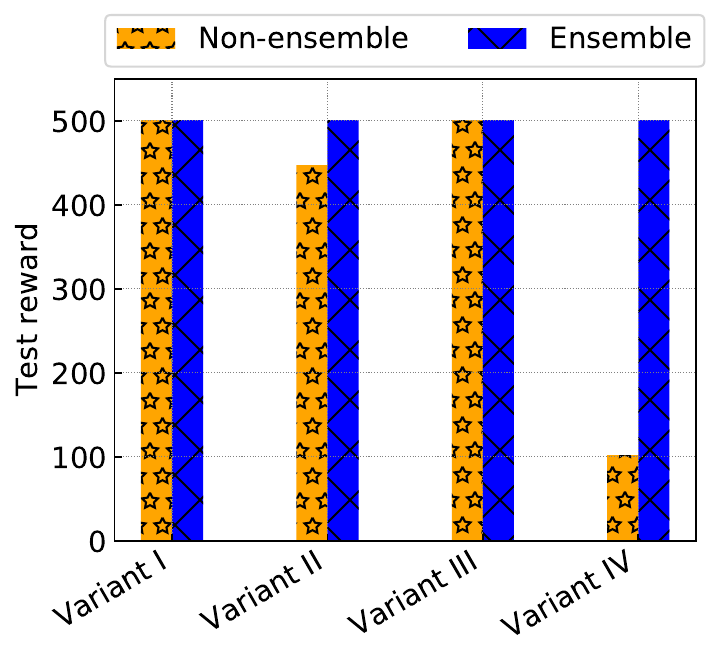}}
	\caption{Different variants of our Normalized attack, where the Cart Pole dataset is considered.}
	\label{our_attack_diff_stage}
\end{figure*}

\myparatight{Impact of the fraction of malicious agents}
Fig.~\ref{fig:frac_of_mali} shows the impact of the fraction of malicious agents on the robustness of our ensemble method on Cart Pole dataset. 
From Fig.~\ref{fig:frac_of_mali}, we observe that when a large fraction of agents are malicious, our ensemble framework can still tolerate all the poisoning attacks across all robust aggregation rules. 
For example, when 40\% of agents are malicious, our ensemble method achieves similar test rewards with that of FedAvg without attack. 
However, as shown in Fig. \ref{Results_CartPole}, even when the fraction of malicious agents is 30\%, existing robust aggregation rules under non-ensemble setting could be easily poisoned

\myparatight{Impact of total number of agents}
Fig.~\ref{fig:num_of_mali} in Appendix shows the influence of varying total agent numbers on our ensemble method under various attacks, with the proportion of malicious agents set at 30\% and the overall number of agents ranging from 30 to 90.
The numbers of groups are set to 5, 7, 7, and 9 when the total agents are 30, 50, 70, and 90, respectively.
We observe that our ensemble method remains robust when the total number of agents varies.

\myparatight{Impact of number of groups}%
Fig.~\ref{fig:num_of_group_size} in Appendix shows the results of our ensemble method with different group numbers, with 30 agents, 30\% of which are malicious. When there is only one group, the ensemble method becomes equivalent to the non-ensemble approach. For three Byzantine-robust methods under various attacks, the test rewards match those of FedAvg without attack when the group sizes are 3, 5, or 7.

\begin{table}[htbp]
  \centering
    \small
  \caption{Different variants of our Normalized attack.}
    \addtolength{\tabcolsep}{-1.2pt}
    \begin{tabular}{|c|c|c|c|}
    \hline
          & Stage I     & Stage  II    & Normalization \\
    \hline
    Variant I   &  \cmark      &  \xmark     &  \cmark \\
    \hline
    Variant II    & \xmark      &   \cmark     & \cmark   \\
    \hline
    Variant III    &  \cmark       &  \cmark    & \xmark   \\
    \hline
    Variant IV (default) &  \cmark  &  \cmark   &  \cmark \\
    \hline
    \end{tabular}%
    \label{tab:attack_variant}%
\end{table}%

\begin{figure}[htbp]
    \centering
    \subfloat[Geometric median] {\includegraphics[width=0.5\linewidth]{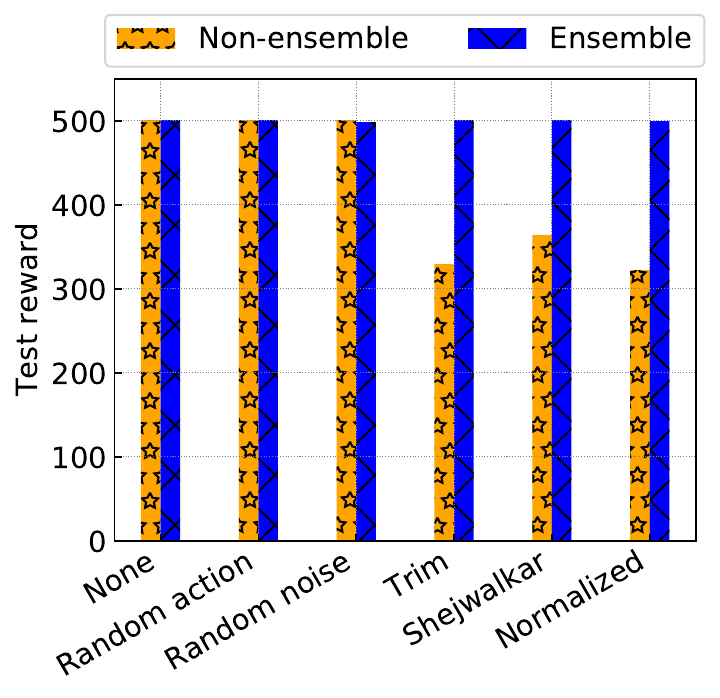}}
    \subfloat[FLAME] {\includegraphics[width=0.5\linewidth]{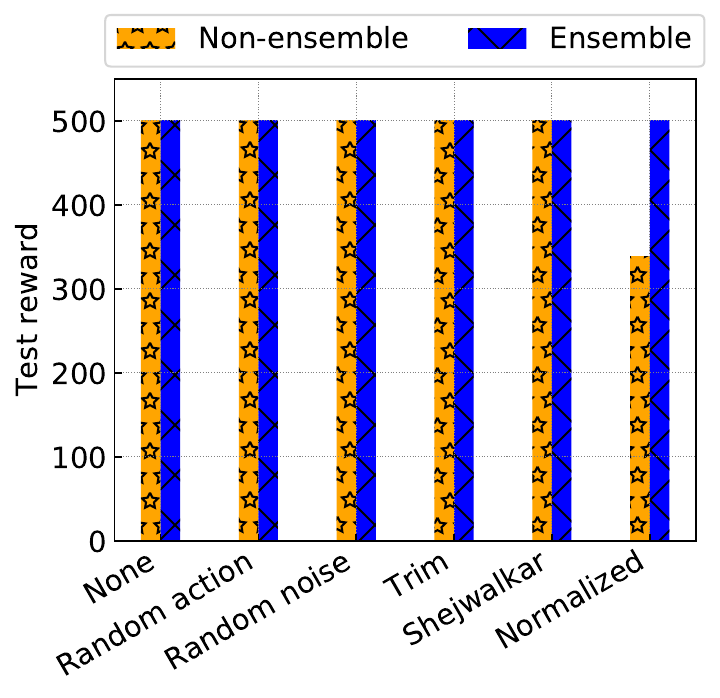}}
    \caption{Results of geometric median and FLAME aggregation rules, where the Cart Pole dataset is considered.}
    \label{CartPole_others}
\end{figure}

\myparatight{Impact of different perturbation vectors}%
Our Normalized attack uses a perturbation vector \(\Delta\). Table~\ref{tab:perturbation_vector} in Appendix lists three types: ``uv'', ``std'', and ``sgn''. \(\text{Avg} \{ \bm{g}_i: i \in [n]\}\) calculates the average of the \(n\) local policy updates, and \(\text{std}\{ \bm{g}_i: i \in [n]\}\) computes their standard deviation. ``sgn'' is our default perturbation vector.
Fig.~\ref{sgn&uv&std} in Appendix shows the results of FedAvg, Trimmed-mean, Median, and FedPG-BR under Normalized attacks with different vectors. ``Normalized-uv'' refers to the Normalized attack using the ``uv'' vector. We observe that FedPG-BR in the non-ensemble setting is particularly vulnerable to the ``sgn'' vector.

\myparatight{Different variants of Normalized attack}%
Our Normalized attack consists of two stages, with policy updates normalized during optimization. Table~\ref{tab:attack_variant} outlines its variants. For example, Variant I skips Stage II, which optimizes the magnitude of malicious updates. Variant IV is our default attack. 
Fig.~\ref{our_attack_diff_stage} illustrates the results of different variants of our Normalized attack.
We observe from Fig.~\ref{our_attack_diff_stage} that Variant IV achieves the most effective attack performance overall.

\myparatight{Results of partial knowledge attack}%
By default, we assume the attacker knows all agents' policy updates. Here, we explore a more realistic scenario where the attacker only knows updates from malicious agents. In this partial knowledge attack, we use \(\text{AR} \{\bm{g}_j: j \in \mathcal{B}\}\) to estimate the pre-attack aggregated update, where \(\bm{g}_j\) is the update from malicious agent \(j\), and \(\mathcal{B}\) is the set of malicious agents. 
Fig.~\ref{Results_CartPole_partial} in Appendix shows the results of the Trim, Shejwalkar, and our Normalized attacks. Even with partial knowledge, Byzantine-robust rules in non-ensemble settings remain vulnerable to poisoning. For example, FedPG-BR achieves a test reward of 242.6 under our Normalized attack.

\myparatight{Initiate the attack in the middle of the training phase}%
We assume by default that the attacker targets the FRL system from the start of training. Here, we explore a scenario where the attack begins midway through training. In the Cart Pole dataset, each agent samples 5,000 trajectories during training. 
Fig.~\ref{fig:start_attack} in Appendix shows the results of FedPG-BR, Trimmed-mean, and Median in the non-ensemble setting. Results for the ensemble framework are omitted since it remains robust against all attacks, even if initiated from the start, as seen in Fig.~\ref{Results_CartPole}. 
From Fig.~\ref{fig:start_attack}, we observe that current robust aggregation methods in non-ensemble settings are still vulnerable to poisoning, even when the attack starts mid-training.

\myparatight{Results of heterogeneous environment}%
In this part, we explore a heterogeneous environment setting where two agents execute the same action at the same state but receive different rewards. In our experiments, we introduce some noise generated from Gaussian distribution $N(0,0.1)$ to the rewards to simulate the heterogeneous environment.
Fig.~\ref{Results_CartPole_non_iid} in Appendix presents the results. We observe that non-ensemble aggregation rules remain vulnerable to poisoning attacks in heterogeneous environments. For example, FedPG-BR achieves a test reward of less than 100 under our Normalized attack. However, our ensemble framework effectively defends against all considered attacks using robust aggregation rules.

\myparatight{Results of other foundational aggregation rules}%
Fig.~\ref{CartPole_others} shows results using geometric median~\cite{ChenPOMACS17} and FLAME~\cite{nguyen2022flame} aggregation rules on the Cart Pole dataset. 
Note that within our ensemble framework, the server still uses the majority vote (action space in the Cart Pole dataset is discrete) to select the action during the testing phase.
We observe that, in a non-ensemble setting, the geometric median and FLAME aggregation rules are susceptible to either existing poisoning attacks or our proposed Normalized attack. In contrast, our ensemble framework remains robust.

\myparatight{Results of our ensemble method when using other aggregation rules to combine the $K$ continuous actions}
In our proposed ensemble framework, the server employs the geometric median aggregation rule to select the subsequent action during the testing phase when the action space is continuous. In this context, we investigate a scenario where the continuous actions are aggregated by the FedAvg or Trimmed-mean aggregation rules during the testing phase within our ensemble framework.
The results are shown in Figs.~\ref{Results_InvertedPendulum_mean}-\ref{Results_InvertedPendulum_trim} in Appendix, where the Inverted Pendulum dataset is considered (with a continuous action space).
From Figs.~\ref{Results_InvertedPendulum_mean}-\ref{Results_InvertedPendulum_trim}, we observe that existing robust foundational aggregation rules like Trimmed-mean and Median are susceptible to poisoning attacks when our ensemble framework employs FedAvg or Trimmed-mean to predict the subsequent action in the testing phase. This vulnerability arises because FedAvg is not robust, and Trimmed-mean is a coordinate-wise aggregation rule, it only has the capability to filter out individual outlier parameters and fails to eliminate an entire policy update, even when detected as malicious.
%
%



\section{Conclusion, Limitations, and Future Work}


We introduced the first model poisoning attacks on Byzantine-robust FRL. Rather than increasing the distance between the aggregated policy updates before and after the attacks, our introduced Normalized attack strives to amplify the angular deviation between the two updates. Additionally, we proposed a unique ensemble method that is provably resistant to poisoning attacks under some mild assumptions. Comprehensive experimental findings demonstrated that our Normalized attack can significantly corrupt Byzantine-robust aggregation methods in non-ensemble configuration, and our ensemble approach effectively safeguards against poisoning attacks.

A limitation of our work is that our Normalized attack requires the attacker to be aware of the server's aggregation rule. An intriguing avenue for future research would be to develop new attacks that do not necessitate such information. 
Our Normalized attack is limited to untargeted poisoning attacks, another interesting future work is to study targeted poisoning attacks~\cite{bagdasaryan2020backdoor,wang2020attack,xie2019dba} to FRL.
Additionally, investigating security issues in multi-agent reinforcement learning~\cite{tan1993multi,bucsoniu2010multi,vinyals2019grandmaster,zhang2018fully,lin2020robustness,fang2024hardness} would be a fruitful area for further exploration.

\begin{acks}
We thank the anonymous reviewers for their comments. 
This work was supported in part by NSF grant No. 2131859, 2125977, 2112562, 1937787, and ARO grant No. W911NF2110182.
\end{acks}

\bibliographystyle{plain}
\bibliography{refs}


\appendix

\section{Appendix}

\begin{algorithm}[t]
	\caption{Training phase of our ensemble framework.}
	\label{our_alg_training}
	\begin{algorithmic}[1]
		\renewcommand{\algorithmicrequire}{\textbf{Input:}}
		\renewcommand{\algorithmicensure}{\textbf{Output:}}
		\Require Number of agents $n$; number of groups $K$; learning rate $\eta$; foundational aggregation rule $\text{AR}\{\cdot\}$; global training rounds $T$.
		\Ensure Global policies $\bm{\theta}_k^T$, $k \in [K]$. 
		\State Divide $n$ agents into $K$ disjoint groups.
		\State Initialize $\bm{\theta}_k^1$, $k \in [K]$.
		\For {$t = 1, 2, \cdots, T$}
		\For {each group $k \in [K]$ in parallel}
		\label{each_group_train}
		\State  The server sends the global policy $\bm{\theta}_k^t$ to all agents in group $k$.
		\label{each_group_train_server}
		\For {each agent $i \in n_k$ in parallel}
		\label{each_group_update}
		\State Updates $\bm{\theta}_{i}^t$ and sends $\bm{g}_{i}^t$ to the server.
		\EndFor
		\label{each_group_update_end}
		\State The server updates the global policy of group $k$ as $\bm{\theta}_k^{t+1} \leftarrow \bm{\theta}_k^t + \eta \cdot \text{AR} \{ \bm{g}_{i}^t: i \in n_k \}.$
		\label{each_group_update_server}
		\EndFor
		\label{each_group_agg}
		\EndFor
	\end{algorithmic}
\end{algorithm}

\begin{algorithm}[t]
	\caption{Testing phase of our ensemble framework.}
	\label{our_alg_testing}
	\begin{algorithmic}[1]
		\renewcommand{\algorithmicrequire}{\textbf{Input:}}
		\renewcommand{\algorithmicensure}{\textbf{Output:}}
		\Require State $s$, $K$ actions $F(s, \bm{\theta}_k^T)$, $k \in [K]$; action space $\mathcal{A}$. 
		\Ensure Action $\Phi(s)$.
		\If {$\mathcal{A}$ is discrete}
		\State Computes action frequency  for each action $a \in \mathcal{A}$ according to Eq.~(\ref{action_freq}).
		\State Obtains $\Phi(s)$ according to Eq.~(\ref{fina_action_discrete}).
		\label{alg2_discrete}
		\ElsIf {$\mathcal{A}$ is continuous}
		\State Calculates $\Phi(s)$ according to Eq.~(\ref{fina_action_continuous}).   
		\label{alg2_continuous}
		\EndIf
	\end{algorithmic}
\end{algorithm}

\begin{table}[htbp]
    \centering
    \caption{Architecture of MLPs for three datasets.}
    \resizebox{\linewidth}{!}{
    \begin{tabular}{|c|c|c|c|}
        \hline
        \multirow{2}{*}{Parameter} & \multicolumn{3}{c|}{Dataset} \\
       \cline{2-4}          & Cart Pole     & Lunar Lander   &     Inverted Pendulum  \\
        \hline
        Hidden weights & $16,16$ & $64,64$ & $64,64$ \\
        \hline
        Activation & RELU & Tanh & Tanh\\
        \hline
        Output activation & \multicolumn{3}{c|}{Tanh}\\
        \hline
    \end{tabular}
    }
    \label{tab:param_MLP}
\end{table}

\subsection{Proof of Theorem~\ref{theorem_1}} 
\label{sec:theorem_1_proof}

Given a test state $s$, the action frequencies for actions $x$ and $y$  when up to $n^{\prime}$ agents are malicious are represented as $v^{\prime}(s,x)$ and $v^{\prime}(s,y)$, respectively.
Under the worst-case condition, for a specific group, if malicious agents are present, the global policy learnt by the group might predict action $y$  instead of  $x$  at state $s$. That is, $v^{\prime}(s,x)$ will decrease by 1 and $v^{\prime}(s,y)$ will increase by 1 after the attack.
Moreover, given that up to $n^{\prime}$ agents can be malicious, a maximum of $n^{\prime}$ groups may include malicious agents.
Then we have that:
\begin{align}
\label{v_x_inequ}
v^{\prime}(s,x) \ge v(s,x) - n^{\prime}, \\
v^{\prime}(s,y) \le v(s,y) + n^{\prime}.
\label{v_y_inequ}
\end{align}

In our proposed ensemble approach, when the test state $s$ is given, if the prediction of action $x$ still holds, then either Condition I or Condition II must be true:
\begin{align}
\label{condition_I_inequ}
\text{Condition I: }& v^{\prime}(s,x) > v^{\prime}(s,y), \\
\text{Condition II: }& v^{\prime}(s,x) = v^{\prime}(s,y)  \text{ and } x<y,
\label{condition_II_inequ}
\end{align}
where $\text{Condition II}$ is true due to the assumption in Theorem~\ref{theorem_1} that if two actions possess the same action frequencies, the action with the smaller index is chosen.

Combining Eqs.~(\ref{v_x_inequ})-(\ref{condition_II_inequ}), we have that:
\begin{align}
n^{\prime} \le \left\lfloor \frac{v(s,x)-v(s,y) - \mathbbm{1}_{\{ y<x \}}}{2} \right\rfloor,
\end{align}
which completes the proof.

\subsection{Proof of Theorem~\ref{theorem_2}} 
\label{sec:theorem_2_proof}

Let $F^{\prime}(s, \bm{\theta}_1^T), F^{\prime}(s, \bm{\theta}_2^T),..., F^{\prime}(s, \bm{\theta}_K^T)$ be the set of $K$ actions after attack.
Since $\Phi(s)$ and $\Phi^{\prime}(s)$ are respectively the before-attack and after-attack aggregated policy updates, then $\Phi^{\prime}(s) - \Phi(s)$ is the geometric median of $K$ vectors $\{F^{\prime}(s, \bm{\theta}_k^T) - \Phi(s): k \in [K]\}$.
Based on Lemma~\ref{sec:appendix_lemma}, let $w$ be defined as $w=\max \left\{ \| F(s, \bm{\theta}_k^T) - \Phi(s) \|  : k \in [K] \right\}$, and with the condition $0< n^{\prime} < K/2$, one has that:
\begin{align}
 \left\| \Phi(s)  - \Phi^{\prime}(s)  \right\|  \leq \frac{2  w(K-n^{\prime})}{K-2n^{\prime}},
\end{align}
which completes the proof.

\subsection{Useful Technical Lemma} \label{sec:appendix_lemma}

\begin{lem}
	\label{lemma_1}
	Let's consider $v_1, \ldots, v_K$ to be $K$ vectors in a Hilbert space, let $v_*$ represent a $(1+\epsilon)$-approximation of their geometric median.
    This means that for $\epsilon \geq 0$,  we have $\sum_{k \in [K]} \|v_k -v_*\| \leq (1+\epsilon) \min_z \sum\nolimits_{k \in [K]} \|v_k - z\|$.
    Given any $r$ with the condition that $0< r < K/2$ and a real number $w$, if the following condition satisfies:
    \begin{align}
    K -r \le \sum_{k \in [K]} \mathbbm{1}_{\|v_k\| \leq w}.
     \end{align}
     
    Then one has:
	\begin{align}
	\|v_*\| \leq w \alpha  + \epsilon \beta, 
	\end{align}
	where $\alpha = \frac{2(K-r)}{K-2r}$, $\beta = \frac{\min_z \sum_{k \in [K]} \|v_k - z \|}{K-2r}$. Ideally, the geometric median sets $ \epsilon=0$.
\end{lem}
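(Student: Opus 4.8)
The plan is to exploit the classical robustness of the geometric median: since at least $K-r$ of the vectors lie within the ball of radius $w$ about the origin, the (approximate) median cannot be dragged far from the origin by the remaining (at most $r$) outliers. First I would record the counting consequence of the hypothesis: writing $G = \{k : \|v_k\| \leq w\}$ for the ``good'' indices and $B = [K] \setminus G$ for the ``bad'' ones, the assumption $K - r \leq \sum_k \mathbbm{1}_{\|v_k\| \leq w}$ gives $|G| \geq K - r$ and hence $|B| \leq r$. Next I would invoke the defining inequality of the $(1+\epsilon)$-approximate median, comparing the candidate $z = 0$ against $v_*$: since $C := \min_z \sum_{k \in [K]} \|v_k - z\| \leq \sum_{k \in [K]} \|v_k\|$, the approximation guarantee yields $\sum_{k \in [K]} \|v_k - v_*\| \leq (1+\epsilon) C \leq \sum_{k \in [K]} \|v_k\| + \epsilon C$, which I rearrange into $\sum_{k \in [K]} \big( \|v_k - v_*\| - \|v_k\| \big) \leq \epsilon C$.

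The heart of the argument is to lower-bound the left-hand side via the reverse triangle inequality, treating good and bad indices separately. For $k \in G$ I would use $\|v_k - v_*\| - \|v_k\| \geq \|v_*\| - 2\|v_k\| \geq \|v_*\| - 2w$, and for $k \in B$ the cruder bound $\|v_k - v_*\| - \|v_k\| \geq -\|v_*\|$. Summing these gives $\sum_{k \in [K]} (\|v_k - v_*\| - \|v_k\|) \geq (|G| - |B|) \|v_*\| - 2 |G| w$. Combining this with the upper bound above produces the single scalar inequality $(|G| - |B|) \|v_*\| \leq 2 |G| w + \epsilon C$, so that $\|v_*\| \leq \frac{2 |G| w}{|G| - |B|} + \frac{\epsilon C}{|G| - |B|}$.

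It then remains to replace $|G|$ and $|B|$ by their worst-case counts. Using $|B| \leq r$ and $|G| = K - |B|$, I would check that $b \mapsto \frac{2(K-b)}{K-2b}$ is increasing on $[0, K/2)$, since its derivative equals $\frac{2K}{(K-2b)^2} > 0$; hence the first term is at most $\frac{2(K-r)}{K-2r} w = \alpha w$. Because $|G| - |B| \geq K - 2r > 0$, the second term is at most $\frac{\epsilon C}{K - 2r} = \epsilon \beta$, and adding the two yields the claimed bound $\|v_*\| \leq w\alpha + \epsilon\beta$; setting $\epsilon = 0$ recovers the exact-median statement used in Theorem~\ref{theorem_2}. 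The main obstacle I anticipate is bookkeeping rather than conceptual: one must resist the temptation to bound $C$ by $w$ (impossible, as a single outlier can make $C$ arbitrarily large) and instead keep $C$ intact inside the $\beta$ term, routing only the ``$1$'' part of $(1+\epsilon)$ into the $w$-dependent bound through $C \leq \sum_{k \in [K]} \|v_k\|$ while sending only the ``$\epsilon$'' part into $\beta$. The monotonicity check that pins the constant to exactly $\frac{2(K-r)}{K-2r}$ also deserves care, since a careless substitution of $|G|$ or $|B|$ would only yield a weaker constant.
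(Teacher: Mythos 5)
Your proof is correct, and it is worth noting that the paper itself contains no proof of this lemma at all---its ``proof'' is a one-line deferral to the cited works of Minsker and of Cohen et al.---so your argument necessarily takes a different route: it reconstructs, self-contained, essentially the standard triangle-inequality argument from that cited literature (in substance the same proof given by Chen, Su, and Xu for their geometric-median robustness lemma). Every step checks out: comparing $v_*$ against the candidate $z=0$ inside the $(1+\epsilon)$-approximation guarantee to get $\sum_{k}(\|v_k-v_*\|-\|v_k\|)\leq \epsilon C$ with $C=\min_z\sum_k\|v_k-z\|$; splitting the reverse triangle inequality over the good set $G$ and bad set $B$ to get the lower bound $(|G|-|B|)\|v_*\|-2|G|w$; and then using $|G|-|B|=K-2|B|\geq K-2r>0$ together with the monotonicity of $b\mapsto 2(K-b)/(K-2b)$ to land exactly on $\alpha$ and $\beta$. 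One point you leave implicit: multiplying the monotone bound by $w$ requires $w\geq 0$, but this does follow from the hypothesis, since $K-r>0$ forces at least one index with $\|v_k\|\leq w$, hence $w\geq 0$; so there is no gap. What your approach buys over the paper's is verifiability---the reader need not chase two references to confirm the constants---and your bookkeeping makes explicit the structural point that the $\epsilon$-dependence must be carried as $\epsilon C/(K-2r)$ (the $\beta$ term) rather than absorbed into the $w$-dependent term, which is exactly why the lemma's conclusion has the two-term form $w\alpha+\epsilon\beta$.
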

\begin{proof}
This lemma is taken directly from~\cite{minsker2015geometric,cohen2016geometric}, so we omit its proof here.
\end{proof}

\subsection{Datasets}
\label{sec:datasets_app}

	\myparatight{Cart Pole~\cite{barto1983neuronlike}} The Cart Pole environment is a simulation of a cart with a pole attached to it by a hinge. The cart can move along a horizontal track, and the pole can swing freely in the air. The goal is to balance the pole on the cart by applying forces to the left or right of the cart. The action space is a discrete space $\{0, 1\}$ representing the direction of the fixed force applied to the cart, where $0$ for pushing the cart to the left, and $1$ for pushing it to the right. A reward of +1 is added for every time step that the pole remains upright. The episode ends if the pole falls over more than 12 degrees from vertical, the episode length is greater than 500, or the cart moves more than 2.4 units from the center. Given that the maximum episode length is 500, the highest possible reward in this scenario should also be 500.
	
	\myparatight{Lunar Lander~\cite{duan2016benchmarking}}
	The Lunar Lander environment is a simulation of a rocket landing on the moon. The rocket’s engines are controlled by choosing one of four actions: do nothing, fire left engine, fire main engine, or fire right engine.
	The action space, therefore, is a discrete space and can be represented as $\{0,1,2,3\}$.
	The goal is to land safely on the landing pad without crashing or going out of bounds. 
	A reward is obtained for every step that the rocket is kept upright, and a penalty for using the engines. The environment is stochastic, meaning that the initial state of the rocket is random within a certain range.
	
	\myparatight{Inverted Pendulum~\cite{barto1983neuronlike}}
	The Inverted Pendulum is similar to the Cart Pole problem, which is another classic control problem where you have to balance a pole on a cart by applying forces to the left or right. Yet, it is different from the Cart Pole in several key aspects. First, the Inverted Pendulum is powered by the Mujoco physics simulator\cite{todorov2012mujoco}, which allows for more realistic and complex experiments, such as varying the effects of gravity. Second, the action space of the Inverted Pendulum is continuous.
	Thirdly, considering that the maximum episode duration is set at 1000, the utmost attainable reward for the Inverted Pendulum dataset is 1000.

\subsection{Compared Poisoning Attacks}
\label{sec:Compared_Poisoning_Attacks_app}

	\myparatight{Random action attack~\cite{fan2021fault}}
	Random action attack is a category of data poisoning attacks in which malicious agents intend to corrupt their local trajectories. In particular, every malicious agent chooses a random action regardless of the state.

	\myparatight{Random noise attack~\cite{fan2021fault}}
	Random noise attack is a kind of model poisoning attack. 
	In each training round, a malicious agent draws each coordinate of its policy update from an isotropic Gaussian distribution with a mean of 0 and a variance of 1,000. 
	
	\myparatight{Trim attack~\cite{fang2020local}}
	This attack operates under the assumption that the server uses Trimmed-mean~\cite{yin2018byzantine} or Median~\cite{yin2018byzantine} as its aggregation rule, to combine the local policy updates sent from agents.
	Trim attack considers each dimension of policy update independently.
	Specifically, malicious agents intentionally manipulate their policy updates so that the aggregated policy update post-attack differs significantly from the one before the attack, for each dimension of policy updates.

	\myparatight{Shejwalkar attack~\cite{shejwalkar2021manipulating}}
	In the Shejwalkar attack, the attacker designs malicious local policy updates with the intent to enlarge the distance between the aggregated policy update before the attack and the one after the attack.

\subsection{Foundational Aggregation Rules}
\label{sec:Aggregation_Rules_app}

	\myparatight{FedAvg~\cite{mcmahan2017communication}}
	In FedAvg, once the server receives local policy updates from all agents, it calculates the global policy update by taking the average of these updates.

	\myparatight{Coordinate-wise trimmed mean (Trimmed-mean)~\cite{yin2018byzantine}}
	Upon receiving $n$ local policy updates, the server first discards the largest $c$ and smallest $c$ elements for each dimension, then computing the average of the remaining values, where $c$ is the trim parameter.

	\myparatight{Coordinate-wise median (Median)~\cite{yin2018byzantine}}
	In the Median aggregation rule, the server determines the aggregated global policy update by computing the coordinate-wise median from all received local policy updates.

	\myparatight{Geometric median~\cite{ChenPOMACS17}}
	For the geometric median aggregation rule, the server computes the aggregated policy update by taking the geometric median of received local policy updates from all agents.

	\myparatight{FLAME~\cite{nguyen2022flame}}
	The FLAME method starts by computing the cosine similarity among agents' local policy updates. It then employs clustering methods like HDBSCAN~\cite{campello2013density} to identify potentially malicious updates. To further reduce the impacts of poisoning attacks, it implements an adaptive clipping mechanism to adjust the local updates. Finally, the server adds noise to the aggregated policy update to obtain the final global update.

	\myparatight{FedPG-BR~\cite{fan2021fault}} 
	In the FedPG-BR aggregation rule, the server first calculates the vector median of all received local policy updates. A local policy update is deemed malicious if it fars from the calculated vector median. To additionally minimize the policy update variance, the server independently samples some trajectories to compute a server policy update. Subsequently, the server leverages the Stochastically Controlled Stochastic Gradient (SCSG)~\cite{lei2017less} framework to update the global policy.

\begin{table}[htbp]
    \centering
    \caption{Additional parameter settings for three datasets.}
    \resizebox{\linewidth}{!}{
    \begin{tabular}{|c|c|c|c|}
        \hline
        \multirow{2}{*}{Parameter} & \multicolumn{3}{c|}{Dataset} \\
       \cline{2-4}          & Cart Pole     & Lunar Lander   &     Inverted Pendulum  \\
        \hline
        $\gamma$ & 0.999 & 0.99 & 0.995 \\
        \hline
        $H$ & 500 & 1000 & 1000 \\
        \hline
        $\Im$& \multicolumn{3}{c|}{0}\\
         \hline
        $ \Delta$ & \multicolumn{3}{c|}{$-\text{sign}(\text{Avg} \{ \bm{g}_i: i \in [n]\})$ }\\
        \hline
        $\hat{\lambda}$ & \makecell[c]{0.83 (decays at\\each iteration\\with factor $1/3$)} & \makecell[c]{1 (decays at\\each iteration\\with factor $1/3$)} & \makecell[c]{0.83 (decays at\\each iteration\\with factor $1/3$)} \\
        \hline
        $\hat{\zeta}$ & \makecell[c]{0.03 (decays at\\each iteration\\with factor $1/3$)} & \makecell[c]{0.02 (decays at\\each iteration\\with factor $1/3$)} & \makecell[c]{0.2 (decays at\\each iteration\\with factor $1/3$)} \\
        \hline
    \end{tabular}
    }
    \label{tab:param_addi}
\end{table}

\begin{table}[htbp]
    \centering
    \caption{Parameter settings of FedPG-BR for three datasets.}
    \resizebox{\linewidth}{!}{
    \begin{tabular}{|c|c|c|c|}
        \hline
         \multirow{2}{*}{Parameter} & \multicolumn{3}{c|}{Dataset} \\
       \cline{2-4}          & Cart Pole     & Lunar Lander   &     Inverted Pendulum  \\
        \hline
        $b$ & 4 & 8 & 12 \\
        \hline
        $N$ & \multicolumn{3}{c|}{$N \sim Geom(\frac{B}{B+b})$}\\
        \hline
        $\sigma$ & 0.06 & 0.07 & 0.25\\
        \hline
        $\delta$ & 0.6 & 0.6 & 0.6 \\
        \hline
    \end{tabular}
    }
    \label{tab:param_FedPG-BR}
\end{table}

\begin{table}[htbp]
  \centering
\addtolength{\tabcolsep}{-3.7pt}
  \caption{Different perturbation vector $\Delta$.}
    \begin{tabular}{|c|c|c|c|}
    \hline
    & uv     & std          & sgn (default)     \\
    \hline
    $\Delta$       &  $ -  \frac{{\text{Avg}} \{ \bm{g}_i: i \in [n]\} }{\left\|{\text{Avg}} \{ \bm{g}_i: i \in [n]\} \right\|} $     &  $-\text{std}\{ \bm{g}_i: i \in [n]\}$  & $-\text{sign}(\text{Avg} \{ \bm{g}_i: i \in [n]\})$   \\
    \hline
    \end{tabular}%
      \label{tab:perturbation_vector}%
\end{table}%

\begin{figure*}[!t]
	\centering
	\subfloat[FedAvg]{\includegraphics[width=0.25 \textwidth]{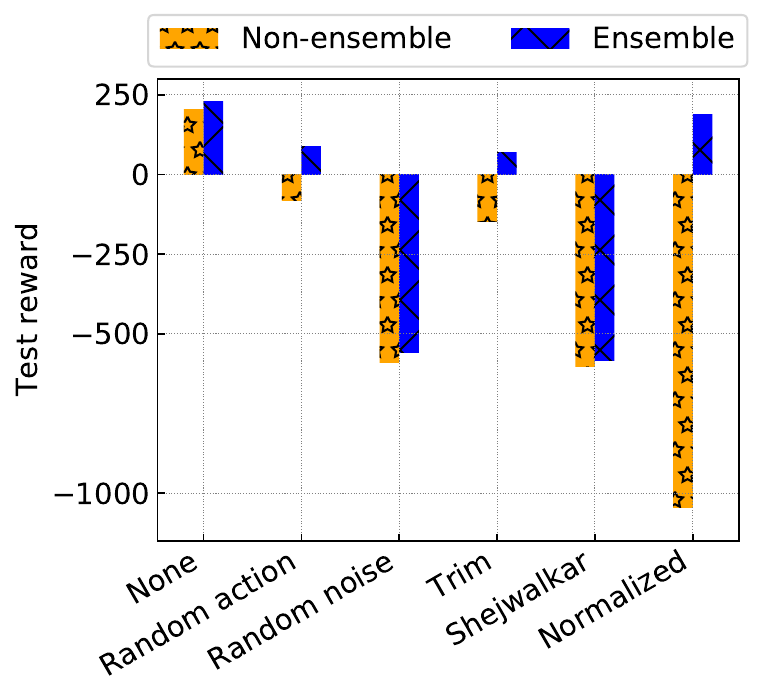}}
	\subfloat[Trimmed-mean]{\includegraphics[width=0.25 \textwidth]{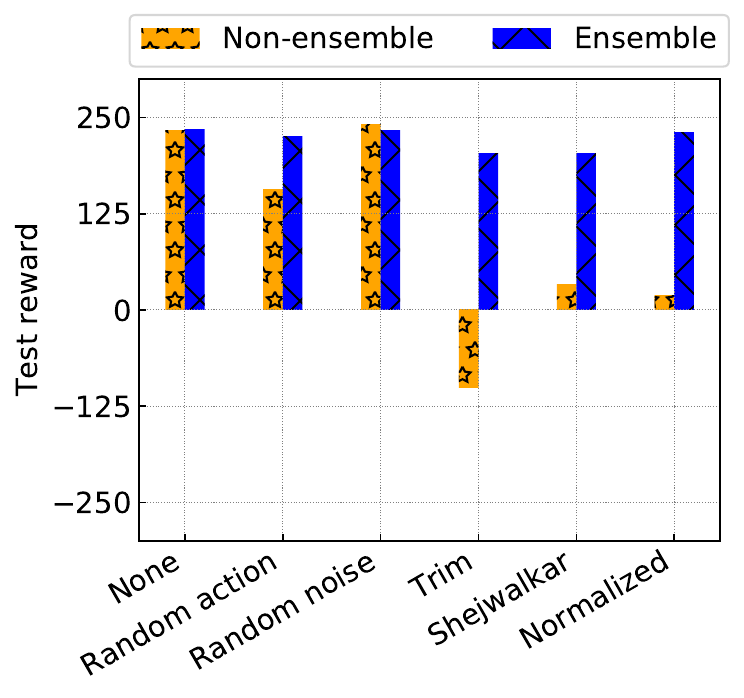}} 
	\subfloat[Median]{\includegraphics[width=0.25 \textwidth]{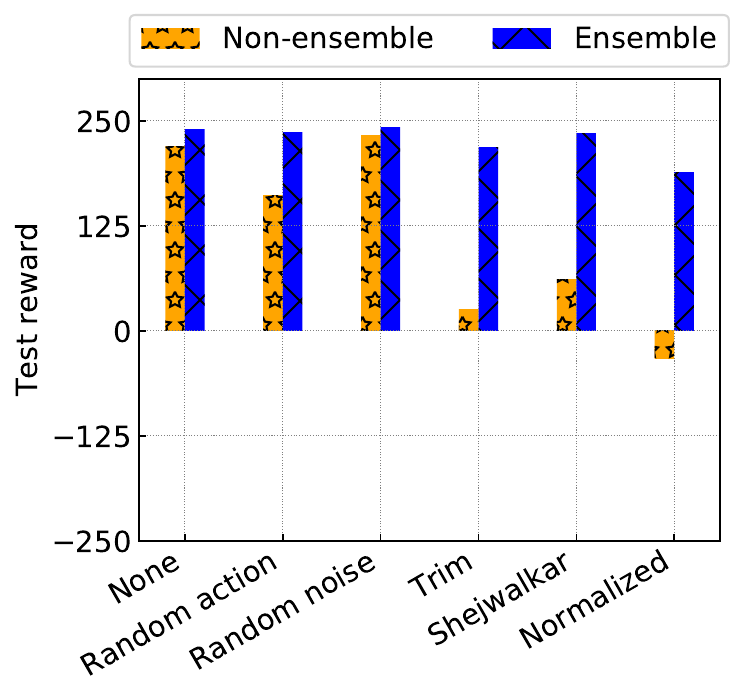}}
	\subfloat[FedPG-BR]{\includegraphics[width=0.25 \textwidth]{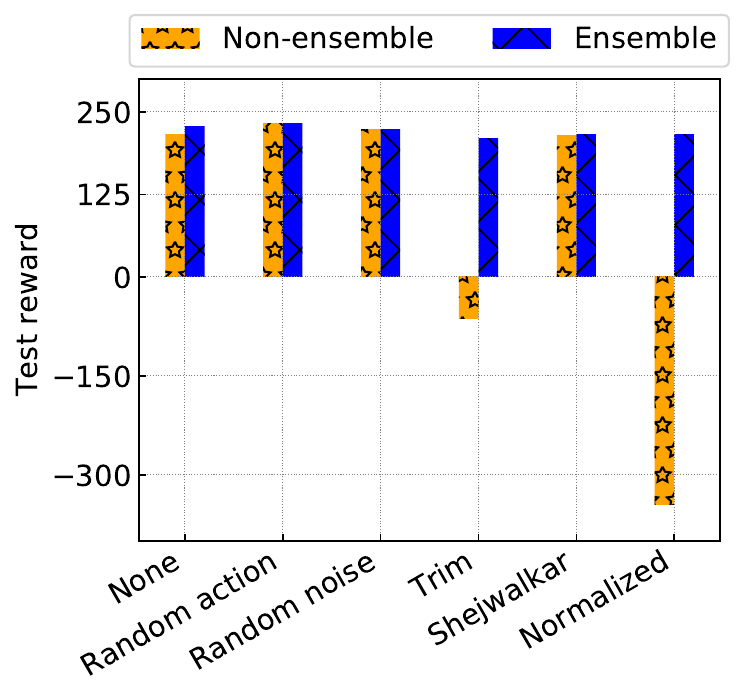}}
	\caption{Results on Lunar Lander dataset.}
	\label{Results_LunarLander}
\end{figure*}

\begin{figure*}[!t]
	\centering
	\subfloat[FedAvg]{\includegraphics[width=0.25 \textwidth]{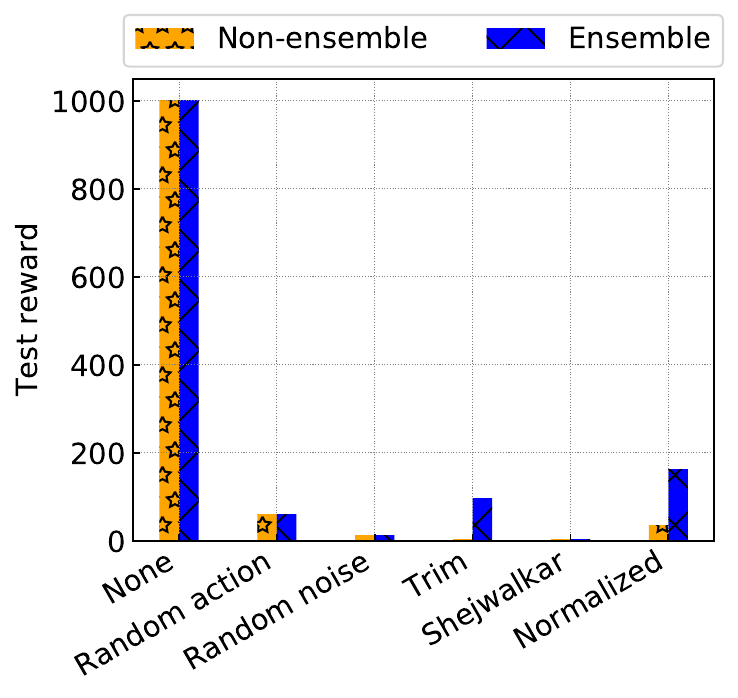}}
	\subfloat[Trimmed-mean]{\includegraphics[width=0.25 \textwidth]{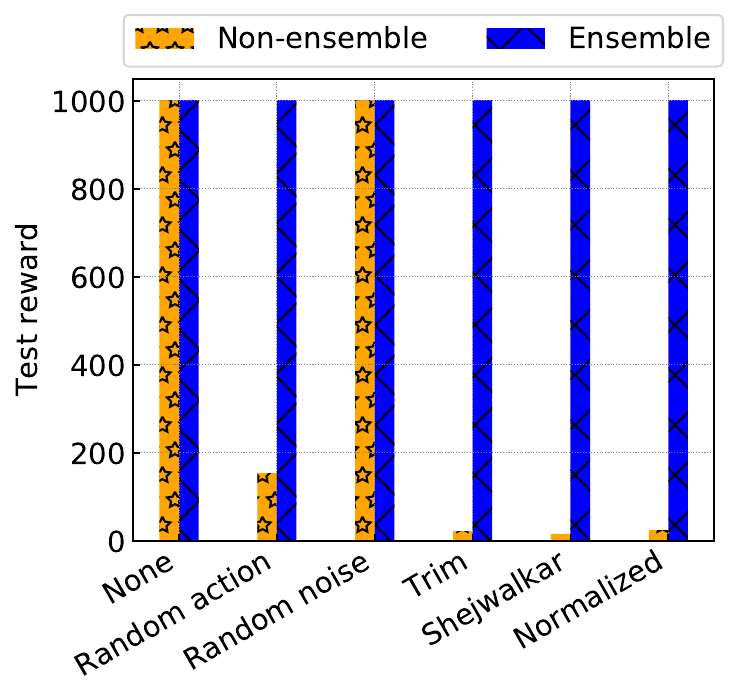}} 
	\subfloat[Median]{\includegraphics[width=0.25 \textwidth]{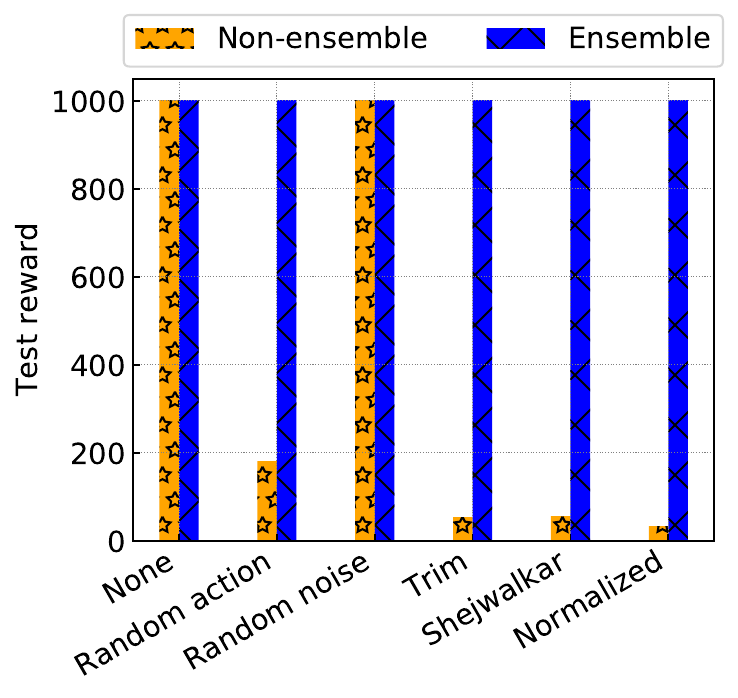}}
	\subfloat[FedPG-BR]{\includegraphics[width=0.25 \textwidth]{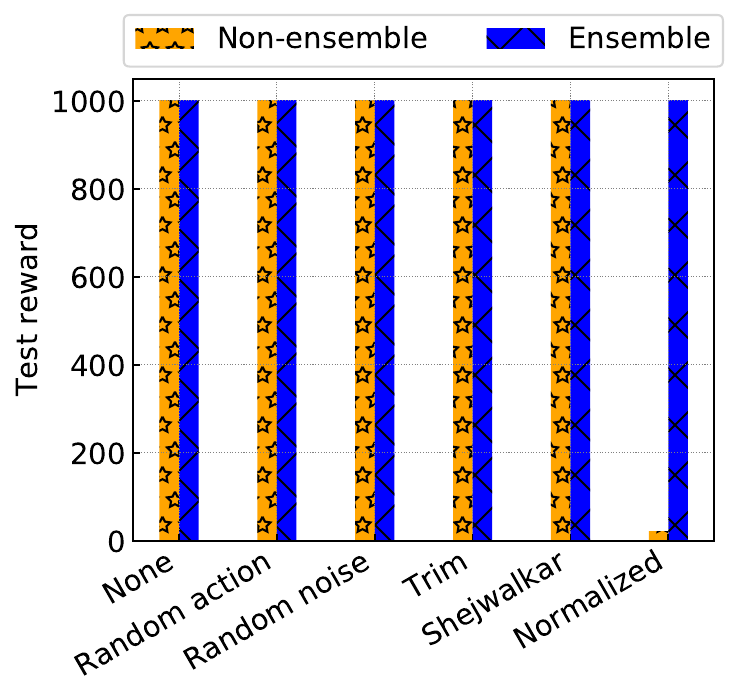}}
	\caption{Results on Inverted Pendulum dataset.}
	\label{Results_InvertedPendulum}
\end{figure*}

\begin{figure*}
	\centering
	\includegraphics[width=\linewidth]{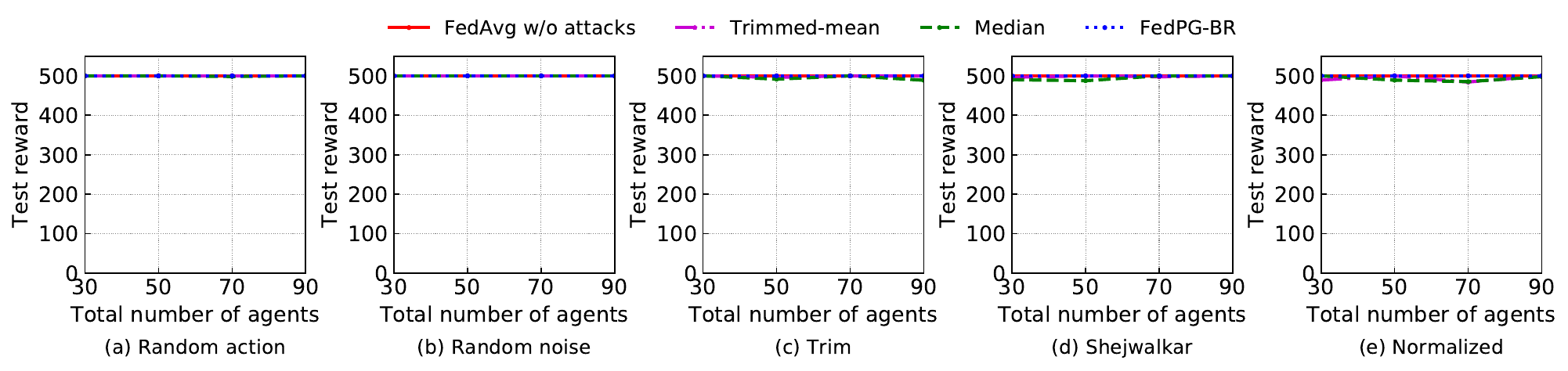}
	\caption{Impact of the total number of agents on our ensemble method, where the Cart Pole dataset is considered.}
	\label{fig:num_of_mali}
\end{figure*}

\begin{figure*}
	\centering
	\includegraphics[width=\linewidth]{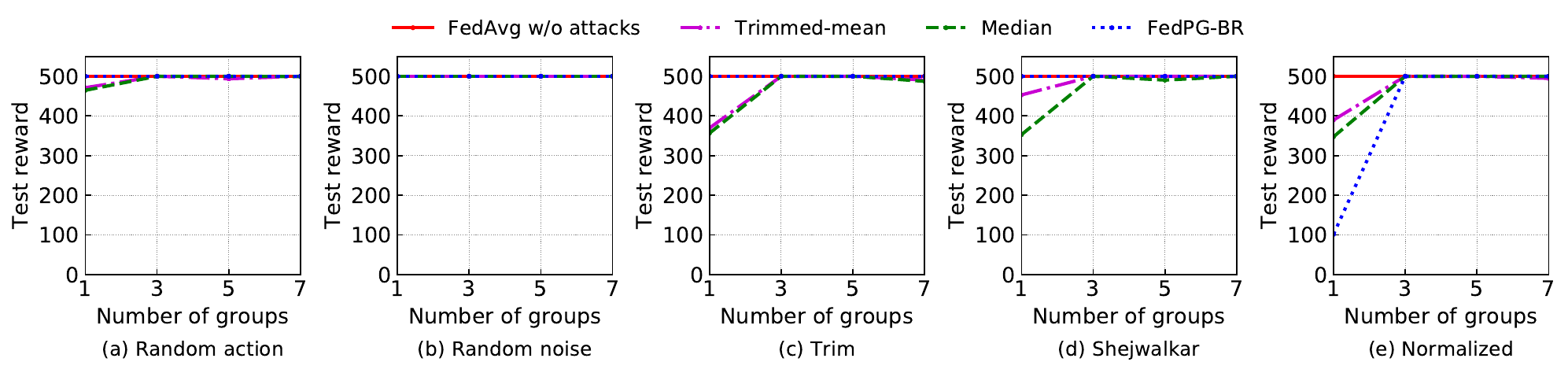}
	\caption{Impact of the number of groups on our ensemble method, where the Cart Pole dataset is considered.}
	\label{fig:num_of_group_size}
\end{figure*}

\begin{figure*}[!t]
	\centering
	\subfloat[FedAvg]{\includegraphics[width=0.25 \textwidth]{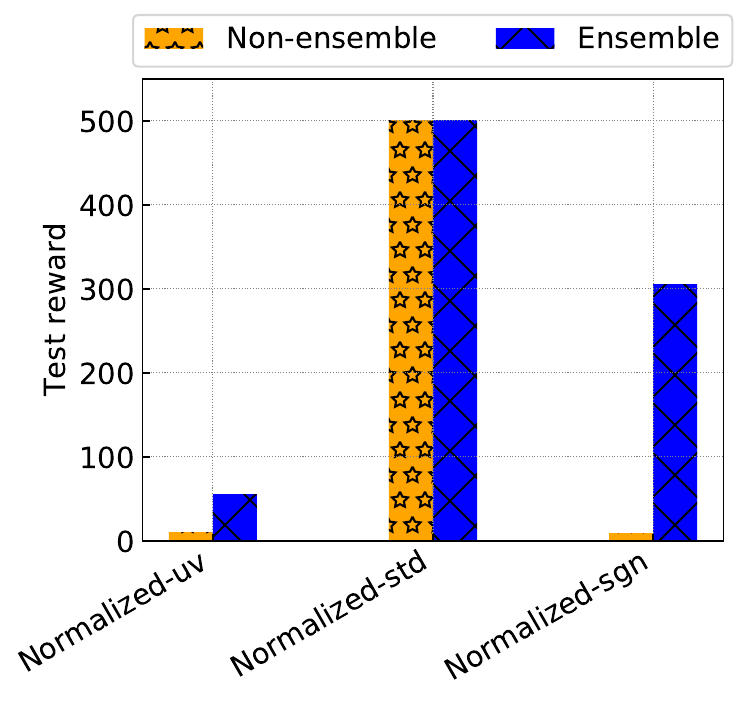}}
	\subfloat[Trimmed-mean]{\includegraphics[width=0.25 \textwidth]{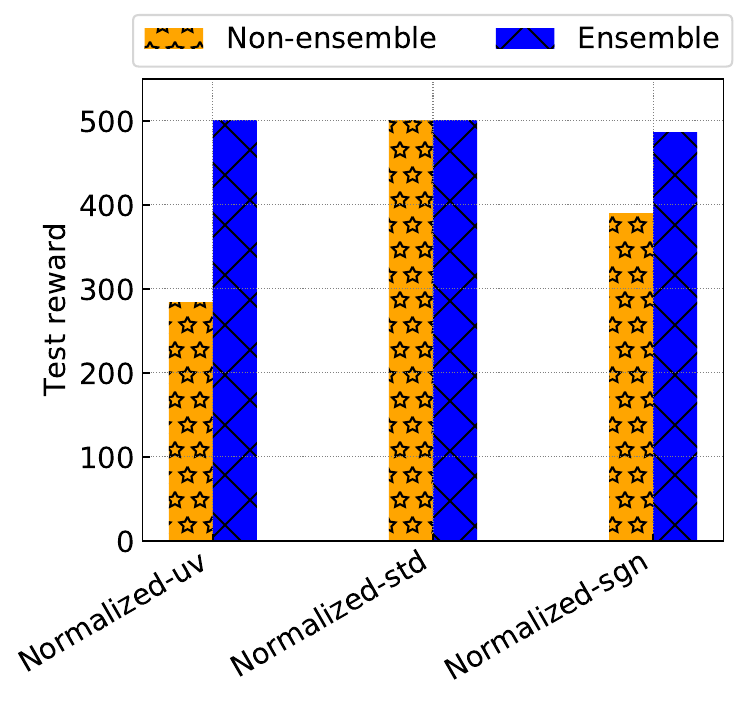}} 
	\subfloat[Median]{\includegraphics[width=0.25 \textwidth]{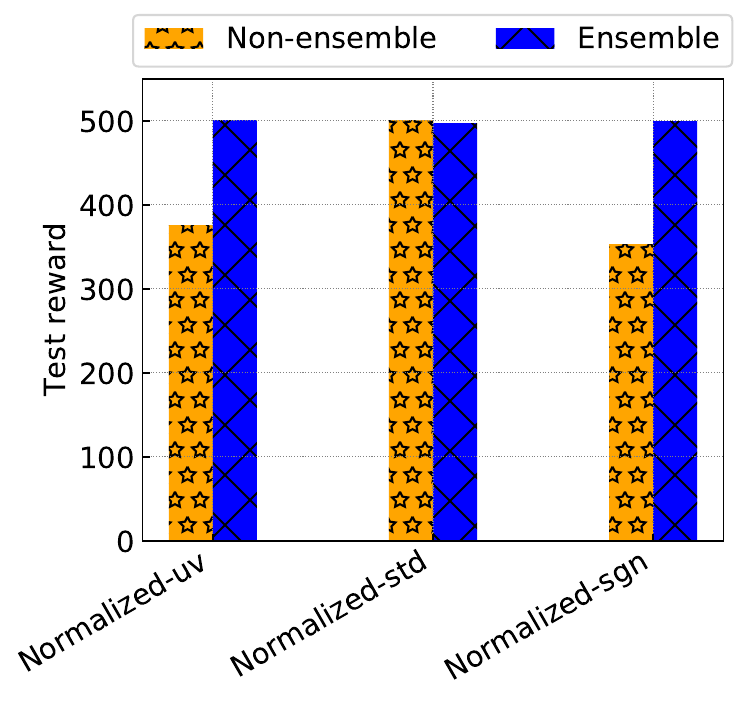}}
	\subfloat[FedPG-BR]{\includegraphics[width=0.25 \textwidth]{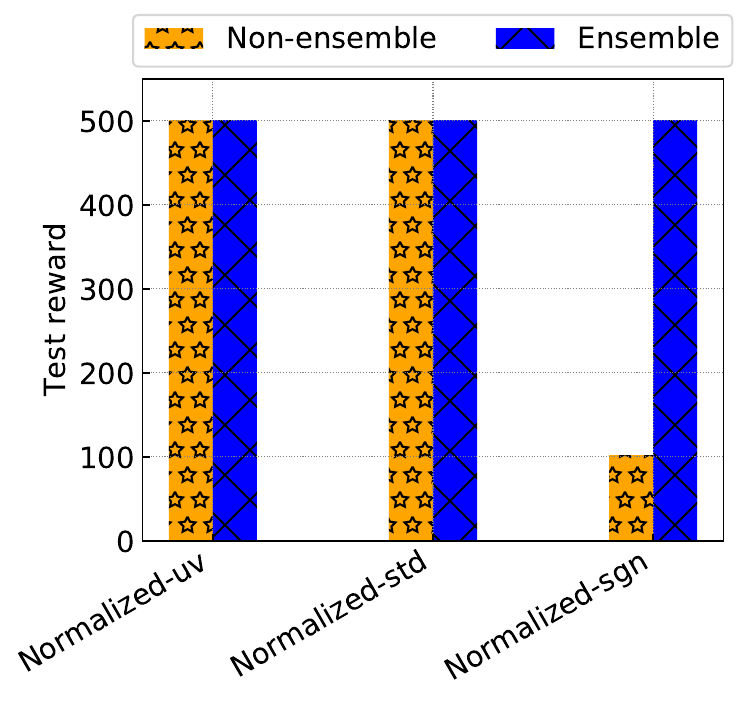}}
	\caption{Different perturbation vectors on our Normalized attack, where the Cart Pole dataset is considered.}
	\label{sgn&uv&std}
\end{figure*}

\begin{figure*}[!t]
	\centering
	\subfloat[FedAvg]{\includegraphics[width=0.25 \textwidth]{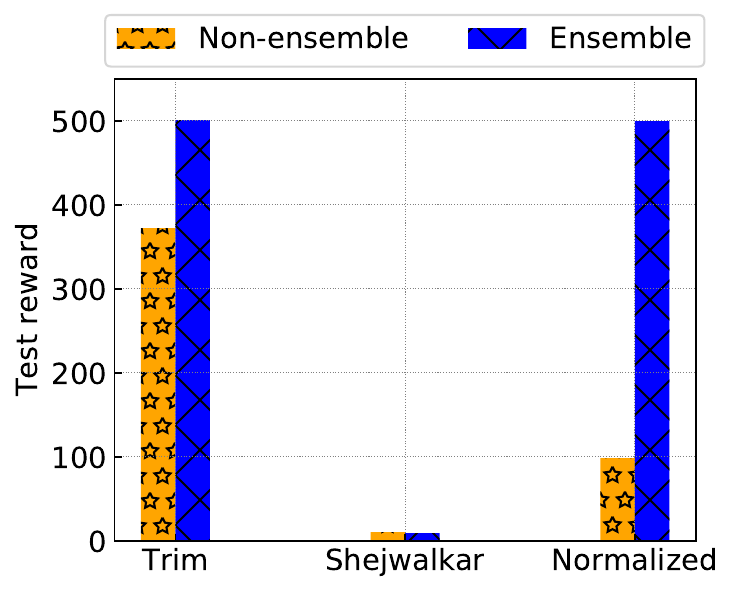}}
	\subfloat[Trimmed-mean]{\includegraphics[width=0.25 \textwidth]{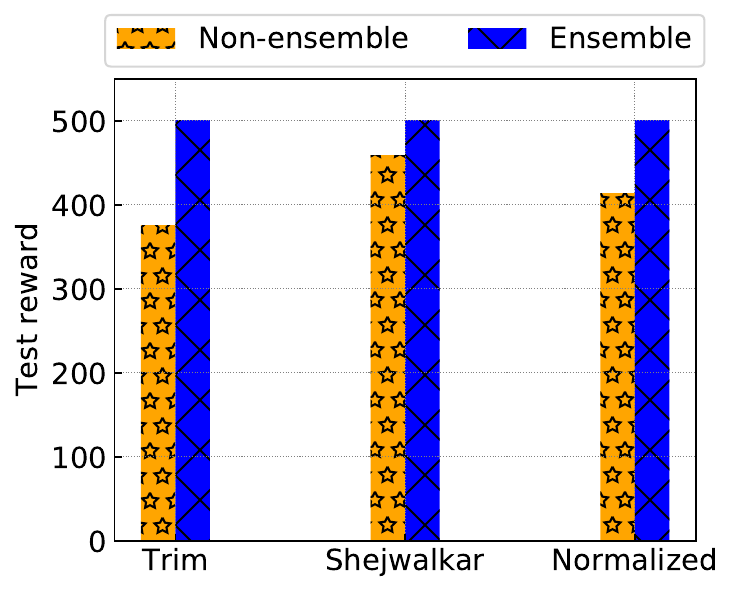}} 
	\subfloat[Median]{\includegraphics[width=0.25 \textwidth]{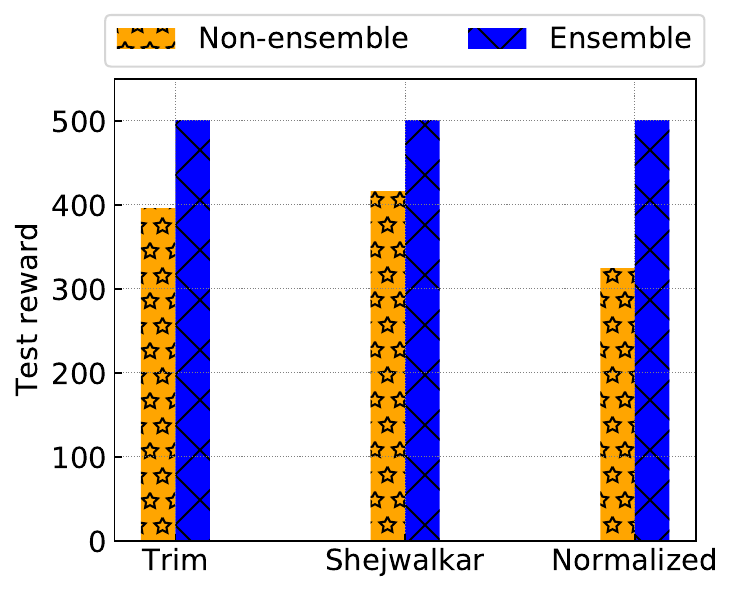}}
	\subfloat[FedPG-BR]{\includegraphics[width=0.25 \textwidth]{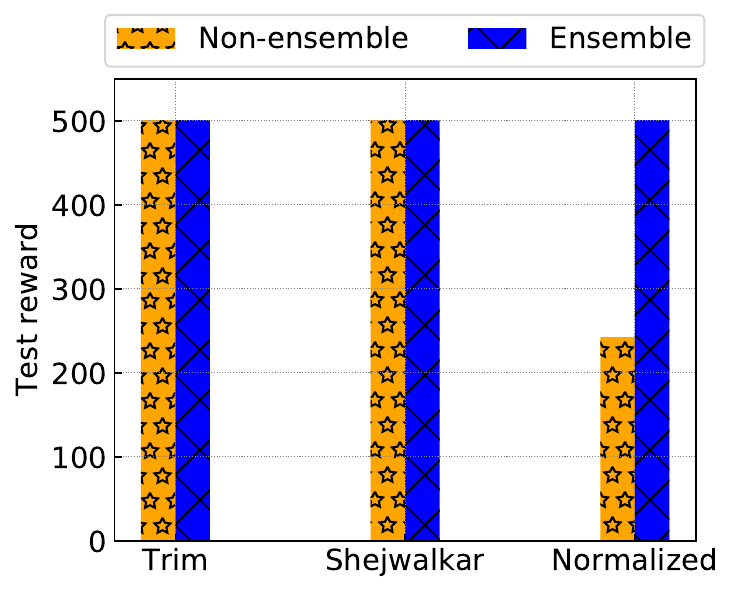}}
	\caption{Results of partial knowledge attack, where the Cart Pole dataset is considered.}
	\label{Results_CartPole_partial}
\end{figure*}

\begin{figure*}
    \centering
    \includegraphics[width=\linewidth]{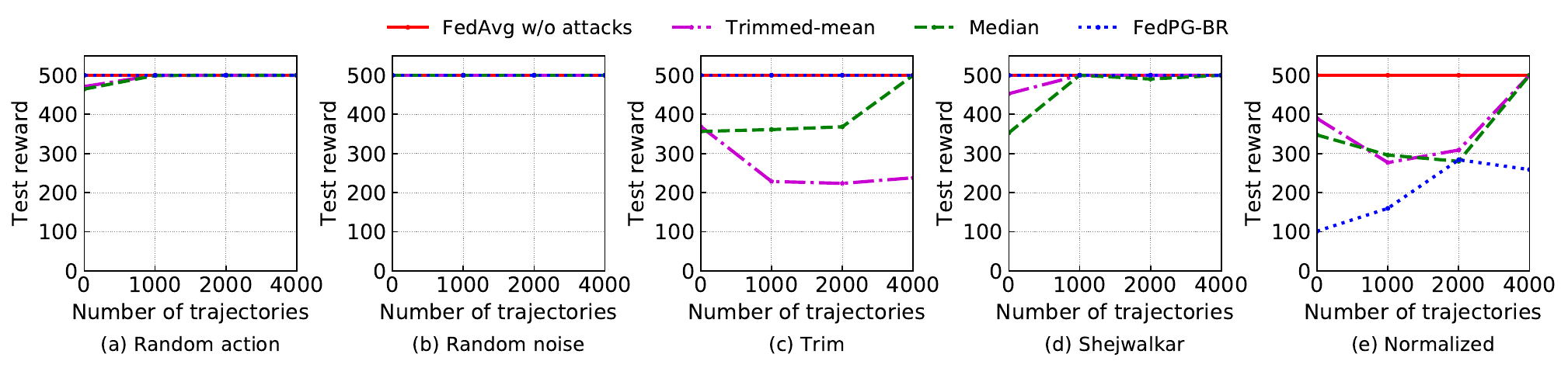}
    \caption{Impact of starting to attack after sampling a certain number of trajectories on different non-ensemble methods, where the Cart Pole dataset is considered.}
    \label{fig:start_attack}
\end{figure*}

\begin{figure*}[!t]
	\centering
	\subfloat[FedAvg]{\includegraphics[width=0.25 \textwidth]{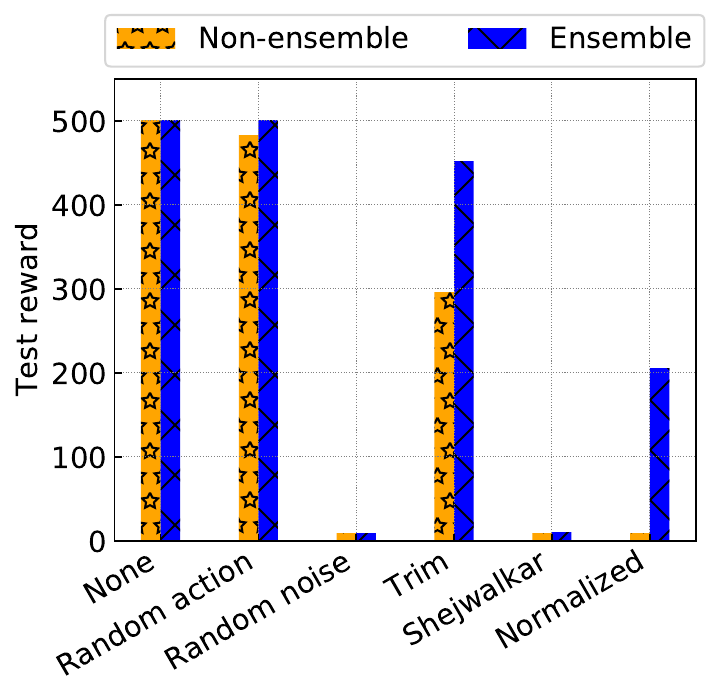}}
	\subfloat[Trimmed-mean]{\includegraphics[width=0.25 \textwidth]{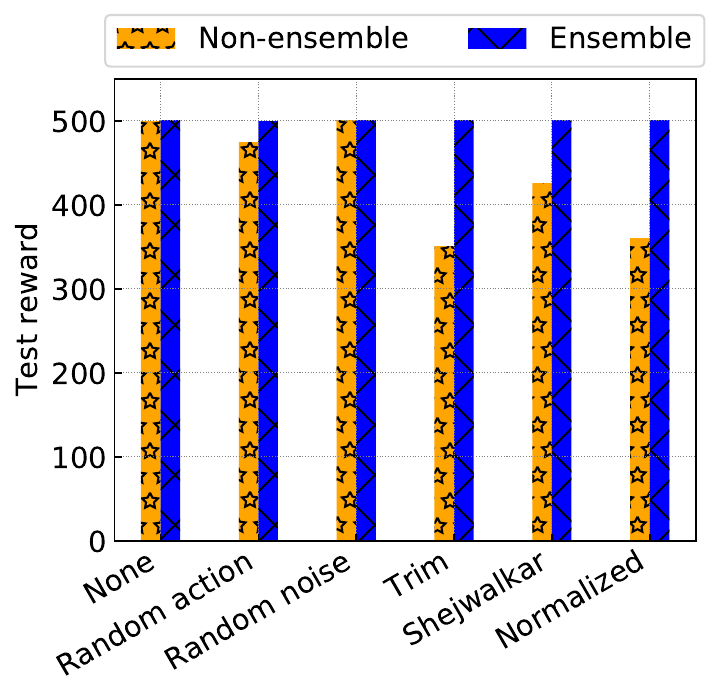}} 
	\subfloat[Median]{\includegraphics[width=0.25 \textwidth]{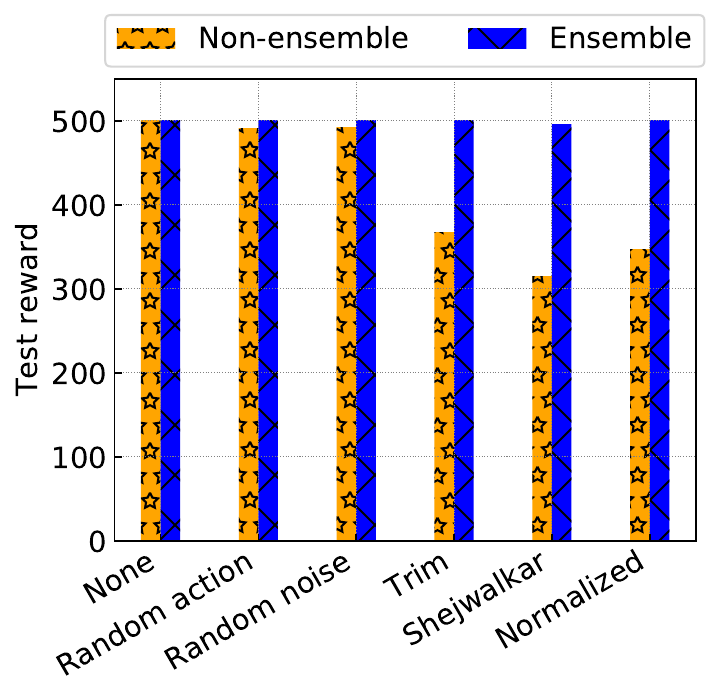}}
	\subfloat[FedPG-BR]{\includegraphics[width=0.25 \textwidth]{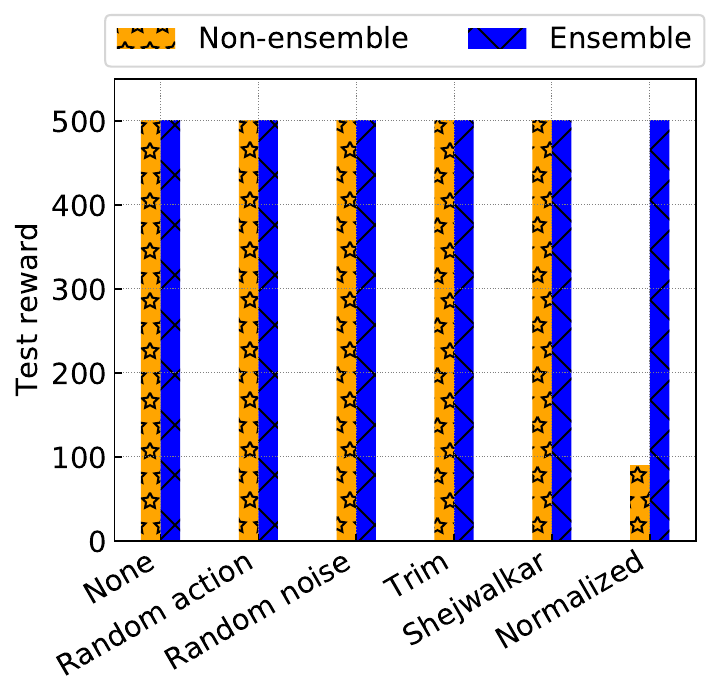}}
	\caption{Results of heterogeneous environment, where the Cart Pole dataset is considered.}
	\label{Results_CartPole_non_iid}
\end{figure*}

\begin{figure*}[htbp]
	\centering
	\subfloat[FedAvg]{\includegraphics[width=0.25 \textwidth]{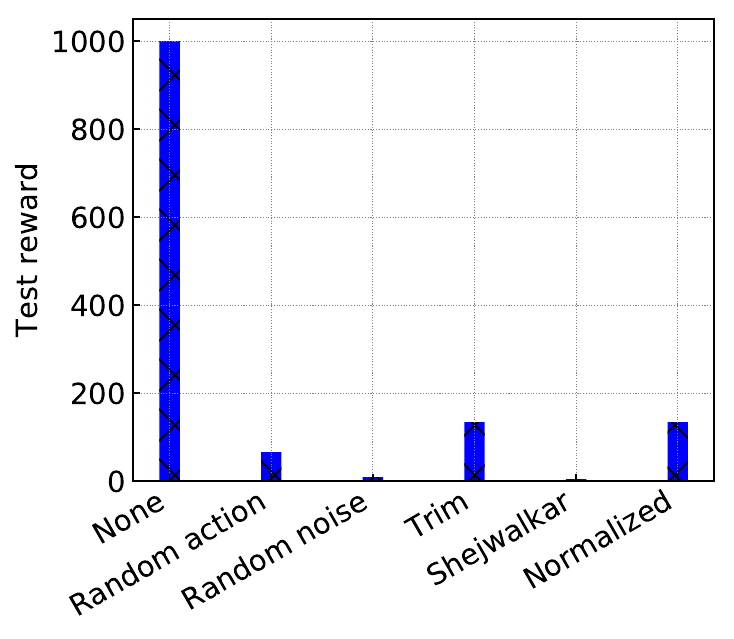}}
	\subfloat[Trimmed-mean]{\includegraphics[width=0.25 \textwidth]{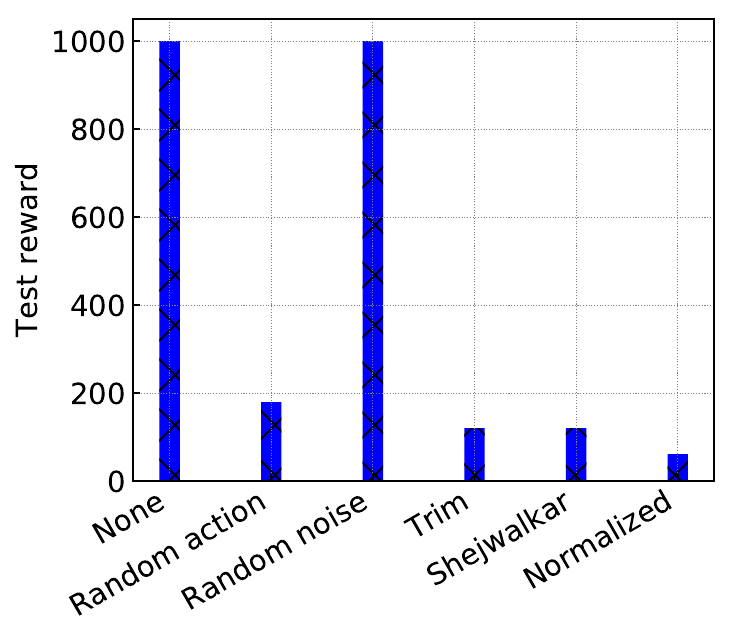}} 
	\subfloat[Median]{\includegraphics[width=0.25 \textwidth]{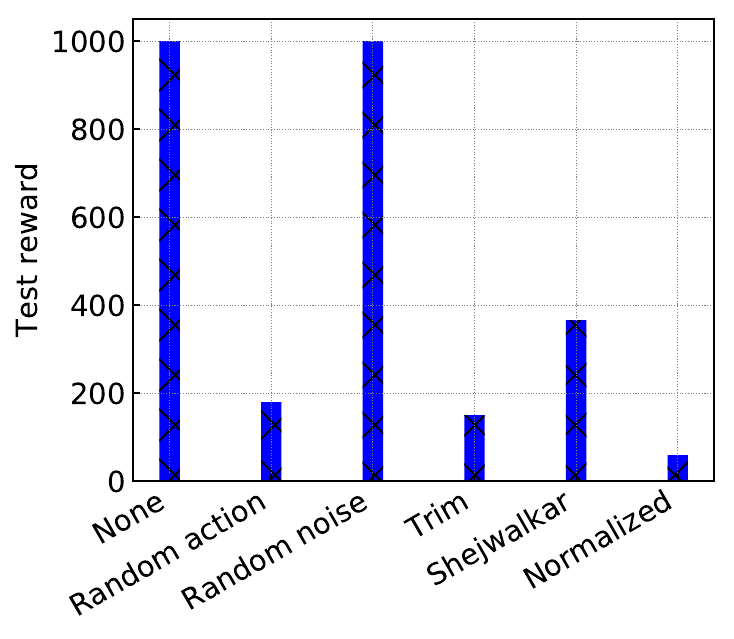}}
	\subfloat[FedPG-BR]{\includegraphics[width=0.25 \textwidth]{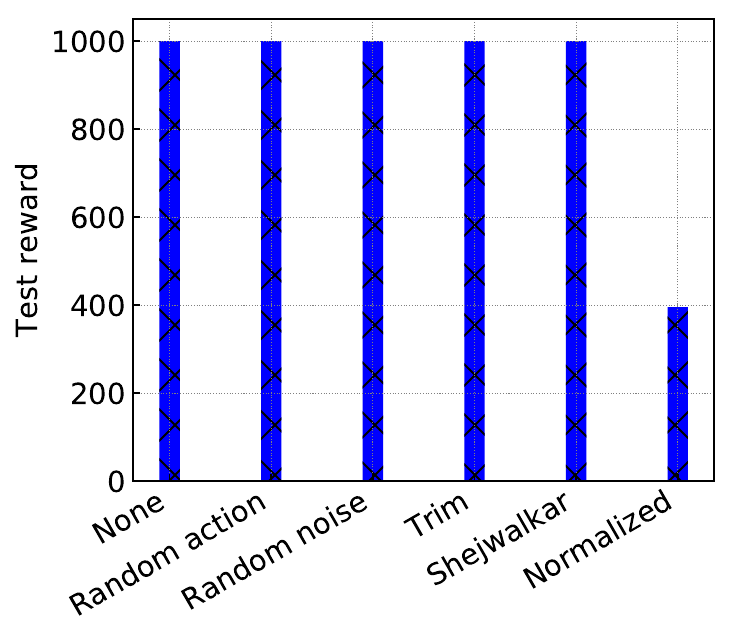}}
	\caption{Results of our ensemble method, where the continuous actions are aggregated by the FedAvg aggregation rule in the testing phase. The Inverted Pendulum dataset is considered.}
	\label{Results_InvertedPendulum_mean}
\end{figure*}

\begin{figure*}[htbp]
	\centering
	\subfloat[FedAvg]{\includegraphics[width=0.25 \textwidth]{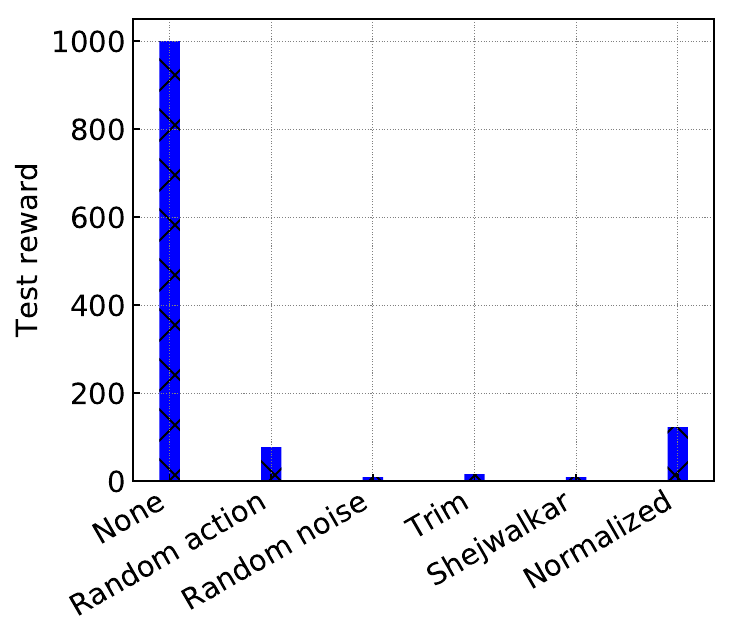}}
	\subfloat[Trimmed-mean]{\includegraphics[width=0.25 \textwidth]{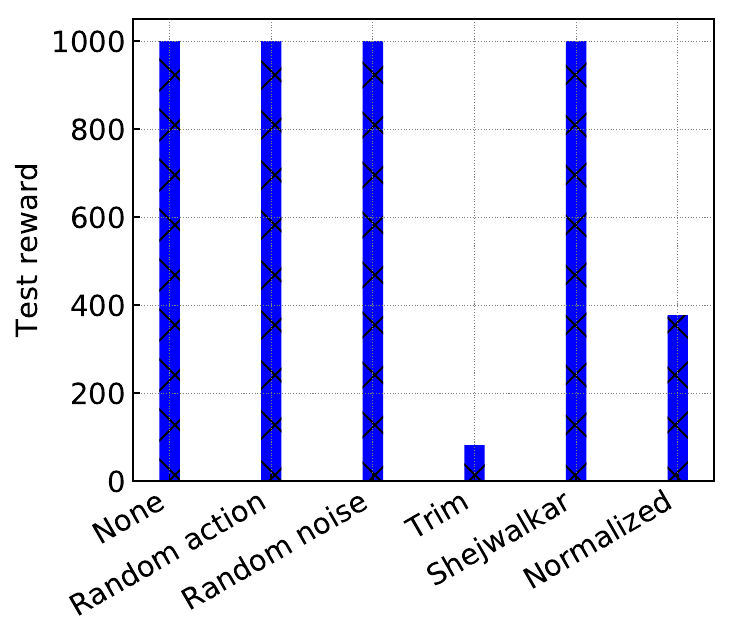}} 
	\subfloat[Median]{\includegraphics[width=0.25 \textwidth]{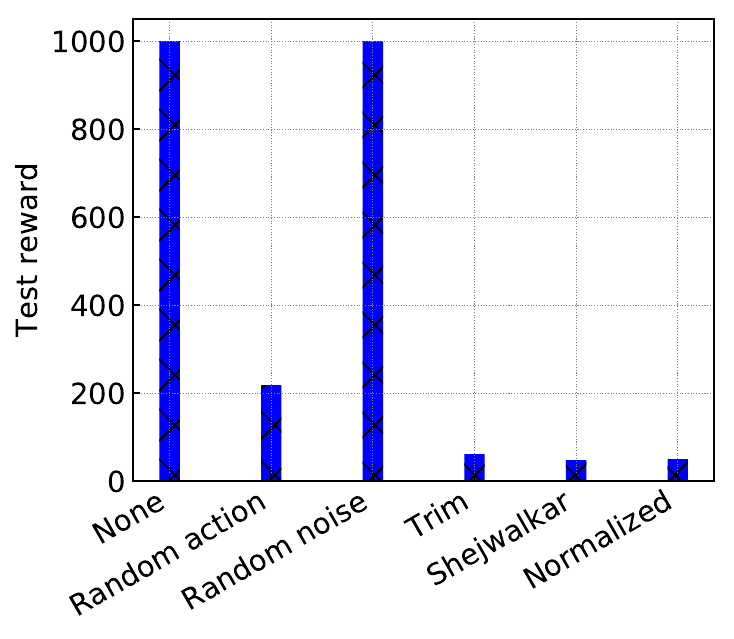}}
	\subfloat[FedPG-BR]{\includegraphics[width=0.25 \textwidth]{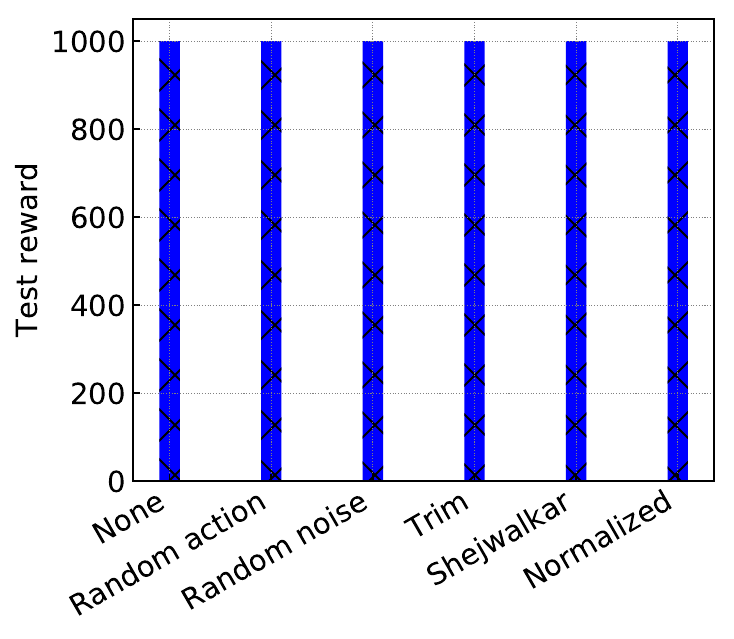}}
	\caption{Results of our ensemble method, where the continuous actions are aggregated by the Trimmed-mean aggregation rule in the testing phase. The Inverted Pendulum dataset is considered.}
	\label{Results_InvertedPendulum_trim}
\end{figure*}

\end{document}